\newtheorem{theorem}{Theorem}
\newtheorem{corollary}{Corollary}
\newtheorem{lemma}{Lemma}
\newtheorem{proposition}{Proposition}
\theoremstyle{definition}
\newtheorem{assumption}{Assumption}
\newtheorem{remark}{Remark}
\numberwithin{equation}{section}
\DeclareMathOperator*{\argmin}{arg\,min}
\DeclareMathOperator*{\Var}{Var}
\def\R{\mathbb{R}}
\def\1{\mathbf{1}}
\def\ve{\varepsilon}
\def\hve{\hat{\ve}}
\def\sj{\sum_{j=1}^n}
\def\avj{\frac1n\sj}
\def\savj{n^{-1/2}\sj}
\def\opn{o_p(n^{-1/2})}
\def\op{o_p(1)}
\def\Dset{\mathbb{D}}
\def\GG{\mathscr{G}}
\def\FF{\mathscr{F}}
\def\RR{\mathscr{R}}
\def\SS{\mathscr{S}}
\def\BB{\mathscr{B}}
\def\rhat{\hat{r}}
\def\shat{\hat{\sigma}}
\def\ahat{\hat{a}}
\def\edf{\mathbb{F}}
\def\hedf{\hat{\edf}}
\def\ellvec{\mathbf{\mathrm{l}}}
\def\kvec{\mathbf{k}}
\def\evec{\mathbf{e}}
\def\Fhat{\hat{F}}
\def\Fclass{\mathfrak{F}}
\def\fmem{\mathfrak{f}}
\begin{document} 

%%% Main title and header for short title %%%
\title[Estimating the error distribution function]{Efficient
  estimation of the error distribution function in heteroskedastic
  nonparametric regression with missing data}

%%% Author name and information %%%
\author[J.\ Chown]{Justin Chown}
\thanks{{\em Ruhr-Universit\"at Bochum, Fakult\"at f\"ur Mathematik,
    Lehrstuhl f\"ur Stochastik, 44780 Bochum, DE \\
    \indent Email: justin.chown@ruhr-uni-bochum.de
}}

%%% Abstract %%%
\begin{abstract}
A residual-based empirical distribution function is proposed to
estimate the distribution function of the errors of a heteroskedastic
nonparametric regression with responses missing at random based on
completely observed data, and this estimator is shown to be
asymptotically most precise. 
\end{abstract}
\bigskip

%%% Construct title page %%%
\maketitle

%%% Keywords or phrases in alphabetical order %%%
\noindent {\em keywords:}
efficient estimator, empirical distribution function, heteroskedasticity,
local polynomial smoother, nonparametric regression, transfer principle
\bigskip

%%% AMS 2010 subject classification categories / subcategories %%%
\noindent{\itshape 2010 AMS Subject Classifications:}
Primary: 62G05; Secondary: 62G08, 62G20

%%%%% Section: Introduction %%%%%

\section{Introduction}
\label{Intro}
An important problem encountered in practice occurs when variation in
the data is found to be dynamic. A typical example is when responses
$Y$ are regressed onto a vector of $m$ covariates $X$ and the errors
of that regression have variation changing in $X$. Under this
condition, many statistical procedures no longer provide consistent
inference. For example, consider a study of crop yields under
different application amounts of a fertilizer. When the variation in
yields depends on the amount of fertilizer applied, the classical
F-test will no longer provide a consistent method of inference for a
regression of these crop yields toward the amount of fertilizer
applied because it assumes the model errors have constant
variation. Examples of heteroskedastic data may be found in Greene
(2000), Vinod (2008), Sheather (2009) and Asteriou and Hall (2011).

We are interested in the case where the responses are missing. This
means observing a random sample $(X_1,\delta_1 Y_1,\delta_1)$,
$\ldots$, $(X_n,\delta_n Y_n, \delta_n)$ of data that is composed of
independent and identically distributed copies of a base observation
$(X, \delta Y, \delta)$. Here $\delta$ is an indicator variable taking
values one, when $Y$ is observed, and zero, otherwise. Throughout this
article, we will interpret a datum $(X,0,0)$ as corresponding to a
categorically missing response, i.e.\ when $\delta = 0$, the first
zero in the datum only describes the product $0\times Y = 0$, almost
surely, because we make the common assumption that $P(|Y| < \infty) =
1$. For this work, we make the following common assumption, which
ensures good performance of the nonparametric function estimators
studied in this article, concerning the covariates $X$:
\begin{assumption} \label{AssumpG}
The covariate vector $X$ has a distribution that is quasi-uniform on
the cube $[0,1]^m$; i.e.\ $X$ has a density that is both bounded and
bounded away from zero on $[0,1]^m$. 
\end{assumption}

We assume the responses are {\em missing at random} (MAR), and,
paraphrasing Chown and M\"uller (2013), we will refer to the
probability model with responses missing at random as the MAR
model. This means the distribution of $\delta$ given both the
covariates $X$ and the response $Y$ depends only on the covariates
$X$, i.e.\
\begin{equation} \label{MARassump} 
P(\delta = 1|X,Y) = P(\delta = 1|X) = \pi(X).
\end{equation}
Since we do observe some responses $Y$, we will assume that $\pi$ is
almost everywhere bounded away from zero on $[0,\,1]^m$. It is then
clear that $E\delta = E[\pi(X)]$ is positive. The MAR assumption is
commonly used and it is very reasonable in many missing data
situations (see Chapter 1 of Little and Rubin, 2002).

In this article we study the heteroskedastic nonparametric regression
model
\begin{equation} \label{modeleq}
Y = r(X) + \sigma(X)e,
\end{equation}
with the error $e$ independent of the covariate vector $X$. In order
to identify the functions $r$ and $\sigma$, we will additionally
assume the error $e$ has mean zero and unit variance. For this work,
we are interested in the case of smooth functions $r$ and $\sigma$
(see below for an explicit definition), and we will assume that
$\sigma$ is a positive--valued function so that it is a well--defined
scale function. Hence, the model above is a well--defined
heteroskedastic nonparametric regression model with identifiable
components. This model is closely related to that studied in Chown and
M\"uller (2013), who study the simpler case of $\sigma(\cdot) \equiv
\sigma_0$, a positive constant, i.e.\ $\sigma(x) = \sigma_0$ for
almost every $x$. Consequently, many results will be familiar. Here we
will need to estimate the two unknown functions $r$ and $\sigma$ with
nonparametric function estimators that are constructed from the
assumed smoothness properties of these functions. We will then use
these estimates in the proposed estimator of the distribution function
of the errors $F$.

To begin, we first consider \eqref{MARassump} and observe that 
\begin{equation*}
E[\delta h(e)] = E\delta E[h(e)]
\quad\text{and}\quad
E[\delta h(e)|X] = \pi(X)E[h(e)]
\end{equation*}
for suitable measurable functions $h$. The relations above naturally
lead to complete case estimators for each of $F$, $r$ and $\sigma$. We
investigate the residual--based empirical distribution function,
$\hedf_c$, given as
\begin{gather} \label{Fhat}
\hedf_c(t) = \frac1N\sj \delta_j \1\big[\hve_{j,c} \leq t\big] 
 = \frac1N\sj \delta_j
 \1\bigg[\frac{Y_j - \rhat_c(X_j)}{\shat_c(X_j)} \leq t \bigg], 
 \quad t \in \R.
\end{gather}
Here $N = \sj \delta_j$ is the number of completely observed pairs
$(X,Y)$ and the subscript ``$c$'' indicates the estimator is based on
the subsample of complete cases described below, which is, in general,
different from the original sample of data. Similar to the estimator
of Chown and M\"uller (2013), this is a complete case estimator.

To better explain the idea, first consider the sample $(X_1, \delta_1
Y_1, \delta_1), \ldots, (X_n, \delta_n Y_n, \delta_n)$. Due to the
i.i.d.\ nature of the this sample, we can order it in any
configuration such that triples with $\delta_j = 1$ are listed before
triples with $\delta_j = 0$. This means we can write the sample as two
subsamples:
$(X_1,Y_1,1), \ldots, (X_N,Y_N,1)$ and
$(X_{N+1},0,0),\ldots,(X_n,0,0)$. The first subsample is called the
{\em complete cases} and $N \leq n$ is the random size of the complete
cases. Hence, the estimators studied in this article use only the part
of the original sample where the responses $Y_j$ are actually
observed. For the estimator $\hedf_c$ given in \eqref{Fhat}, this
means we use only the residuals that can actually be constructed
directly from the observed responses:
\begin{equation*}
\hve_{j,c} = \frac{Y_j - \rhat_c(X_j)}{\shat_c(X_j)},
 \qquad j=1,\ldots,N,
\end{equation*}
where $\rhat_c$ and $\shat_c$ are each respectively suitable complete
case estimators of the regression function $r$ and the scale function
$\sigma$. Since we are only using a part of the original data based on
the auxiliary information that $\delta = 1$, which now has different
stochastic properties than the original data, we will, nevertheless,
argue below that $\hedf_c$ is both a consistent and an efficient
estimator for $F$.

In this work, we use local polynomial estimators of the first and
second conditional moments $r(x) = E(Y|X=x)$ and $r_2(x) =
E(Y^2|X=x)$, respectively, which we will use later to construct our
estimators $\rhat_c$ and $\shat_c$. Local polynomial estimation
follows naturally by a Taylor expansion argument, and, therefore,
follows from both of the functions $r$ and $\sigma$ satisfying certain
smoothness conditions; i.e.\ we assume both $r$ and $\sigma$ lie on
the H\"older space of functions $H(d,\varphi)$ with domain
$[0,\,1]^m$. Paraphrasing M\"uller, Schick and Wefelmeyer (2009), a
function from $[0,\,1]^m$ to $\R$ belongs to $H(d,\varphi)$, if it has
continuous partial derivatives up to order $d$ and the partial
derivatives of order $d$ are H\"older with exponent $0 < \varphi \leq
1$. We will write $H_1(d,\varphi)$ for the unit ball of $H(d,\varphi)$
(see M\"uller, Schick and Wefelmeyer, 2009, for an explicit
definition).

To define the local polynomial estimators of degree $d$, we first
introduce some notation. Let $I(d)$ be the set of multi-indices $i =
(i_1,\ldots,i_m)$ such that $i_1 + \dots + i_m \leq d$. These
multi-indices correspond with the partial derivatives of $r$ and $r_2$
(and hence $\sigma$) whose order is at most $d$. The local polynomial
estimators of $r$ and $r_2$ are respectively given by $\hat
\gamma_{a,0}$, for $a = 1,2$, where $\hat \gamma_{a,0}$ denotes the $0
= (0,\ldots,0)$ entry of the vector
\begin{equation*}
\hat \gamma_a = \argmin_{\gamma = (\gamma_i)_{i \in I(d)}} \sj\delta_j 
 \bigg\{Y_j^a - \sum_{i \in I(d)} \gamma_i
 \psi_i\bigg(\frac{X_j - x}{\lambda_n}\bigg)\bigg\}^2
 w\bigg(\frac{X_j - x}{\lambda_n}\bigg), \quad a = 1,2.
\end{equation*}
Here 
\begin{equation*}
\psi_i(x) = \frac{x_1^{i_1}}{i_1!}\cdots\frac{x_m^{i_m}}{i_m!}, 
\qquad x=(x_1,\dots,x_m) \in [0,\,1]^m, 
\end{equation*}
$w(x)= w_1(x_1)\cdots w_m(x_m)$ is a product of densities, and
$\{\lambda_n\}_{n \geq 1}$ is a bandwidth sequence, i.e.\ a sequence
of positive numbers satisfying $\lambda_n \to 0$, as $n \to
\infty$. Hence, we introduce our respective function estimators of $r$
and $\sigma$ pointwise at each $x \in [0,\,1]^m$ as $\rhat_c(x) = \hat
\gamma_{1,0}$ and $\shat_c(x) = \{\hat \gamma_{2,0} - \hat
\gamma_{1,0}^2\}^{1/2}$. Note that $\delta_1,\ldots,\delta_n$ appear
in the formula above because we require only the complete cases to
estimate both $r$ and $r_2$, and the minimization procedure above is
unaffected by the proportion $\pi$ of missing data.

Neumeyer and Van Keilegom (2010) construct an estimator related to
$\hedf_c$ for the full model using local polynomial estimators of the
first and second conditional moments, i.e.\ the simpler case where
$\delta_j = 1$, $j=1,\ldots,n$. However, these authors require the
density function of the covariates $g$ to be differentiable. For our
model, we work with the conditional density function $g_1$ of the
covariates $X$ given $\delta = 1$, and this differentiability
requirement would then be imposed on $g_1$. Using the identity for the
conditional distribution function $G_1$ of the covariates $X$ given
$\delta=1$, we have $G_1(dx) = \{\pi(x)/E\delta\} G(dx)$, and we can
see that any differentiability requirements imposed on the density
function $g_1$ must also apply to $\pi$.

To alleviate this differentiability requirement, we turn our attention
to the results of M\"uller, Schick and Wefelmeyer (2007, 2009), who
impose no such requirement. Investigating the proof of Lemma 1 of
M\"uller, Schick and Wefelmeyer (2009), reveals straightforward
modifications of those results for local polynomial function
estimation in a homoskedastic model to the heteroskedastic model
considered here. Since this approach requires Assumption
\ref{AssumpG}, using the relation between $G_1$ and $G$ above and the
bounding assumption on $\pi$, we observe that $G_1$ is quasi--uniform
whenever $G$ is quasi--uniform. We arrive at the following crucial
technical corollary to Lemma 1 of M\"uller, Schick and Wefelmeyer
(2009) for the estimators $\rhat_c$ and $\shat_c$.
\begin{corollary} \label{corMSWlem1}
Let Assumption \ref{AssumpG} hold. Suppose the regression function $r$
and the scale function $\sigma$ both belong to $H(d, \varphi)$ with
domain $[0,\,1]^m$. Further suppose the error variable has mean
equal to zero, variance equal to one and a finite moment of order $q >
4s/(2s - m)$, with $s = d + \varphi > 3m/2$. Assume the missingness
proportion $\pi$ is almost everywhere bounded away from zero on
$[0,\,1]^m$, and the densities $w_1,\ldots,w_m$ are $(m+2)$--times
continuously differentiable and have compact support $[-1,1]$. Let
$\lambda_n \sim (n\log(n))^{-1/(2s)}$. Then there is a random function
$\ahat_{1,c}$, associated to the complete case local polynomial
estimate $\rhat_c$ of $r$, such that 
\begin{equation*}
P\big(\ahat_{1,c} \in H_1(m,\alpha)\big) \to 1,
\end{equation*}
for some $\alpha > 0$,
\begin{equation*}
\int_{[0,\,1]^m} \big|\ahat_{1,c}(x)\big|^{1 + b}g_1(x)\,dx = \opn,
\end{equation*}
for $b > m/(2s - m)$, and 
\begin{equation*}
\sup_{x \in [0,\,1]^m} \big|\rhat_c(x) - r(x) - \ahat_{1,c}(x)\big| 
 = \opn.
\end{equation*}
If, additionally, the error variable has a finite moment of order
$2q$, then there is a random function $\hat a_{2,c}$, associated to
the complete case local polynomial estimate $\rhat_{2,c}$ of $r_2$,
such that
\begin{equation*}
P\big(\ahat_{2,c} \in H_1(m,\alpha)\big) \to 1,
\end{equation*}
\begin{equation*}
\int_{[0,\,1]^m} \big|\ahat_{2,c}(x)\big|^{1 + b}g_1(x)\,dx = \opn
\end{equation*}
and
\begin{equation*}
\sup_{x \in [0,\,1]^m} \big|\rhat_{2,c}(x) - r_2(x) - \ahat_{2,c}(x)\big| 
 = \opn.
\end{equation*}
\end{corollary}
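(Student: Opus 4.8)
The plan is to deduce this corollary from Lemma~1 of M\"uller, Schick and Wefelmeyer (2009) by combining the \emph{transfer principle} for the MAR model with a reading of their proof that accommodates heteroskedasticity. First I would pass to the complete-case model: conditionally on $N$, the triples $(X_1,Y_1,1),\dots,(X_N,Y_N,1)$ form an i.i.d.\ sample drawn from the conditional law of $(X,Y)$ given $\delta=1$. Under \eqref{MARassump}, $\delta$ is conditionally independent of $Y$ given $X$, and since $e$ is independent of $X$ this forces the conditional law of $e$ given $(X,\delta=1)$ to be again the law of $e$; hence $E(Y\mid X=x,\delta=1)=r(x)$ and $E(Y^2\mid X=x,\delta=1)=r_2(x)$, so the two target functions are unchanged, and in the complete-case model $e$ still has mean zero, unit variance, and the assumed finite moments of order $q$, respectively $2q$.

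Next I would verify the hypotheses of the cited lemma for this conditional model. Its covariate law is $G_1$, and from the identity $G_1(dx)=\{\pi(x)/E\delta\}\,G(dx)$ together with Assumption~\ref{AssumpG} and the boundedness of $\pi$ on $[0,\,1]^m$ (above by one, below away from zero), the density $g_1$ is bounded and bounded away from zero, so $G_1$ is quasi-uniform. Because $r$ and $\sigma$ are continuous on the compact cube $[0,\,1]^m$ they are bounded there, so $E\big(|Y|^{q}\bigm| X=x,\delta=1\big)\le 2^{q-1}\big(\sup_{x}|r(x)|^{q}+\sup_{x}\sigma(x)^{q}\,E|e|^{q}\big)$, a finite constant, and the same estimate with $Y$ replaced by $Y^2$ uses $E|e|^{2q}<\infty$, which is precisely the extra moment assumed for the $r_2$ part. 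The kernel conditions (product form, $(m+2)$--times continuous differentiability, support $[-1,1]$) and the bandwidth rate $\lambda_n\sim(n\log(n))^{-1/(2s)}$ are exactly as needed; moreover $N/n\to E\delta>0$ almost surely, so $\lambda_n$ is also of the order $(N\log(N))^{-1/(2s)}$, and since the conclusions of the lemma are insensitive to constant factors in the bandwidth they may be transcribed to the $n$--scale, which yields the $\opn$ remainders claimed in the corollary.

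The only feature not literally covered by M\"uller, Schick and Wefelmeyer (2009) is that their regression model is homoskedastic, whereas here the regression ``errors'' are conditionally heteroskedastic --- namely $\sigma(X)e$ for the first conditional moment and $Y^2-r_2(X)$ for the second. I would therefore revisit their proof of Lemma~1 and check that homoskedasticity is nowhere used: the stochastic expansion is built from a truncation of the response, a bracketing and chaining bound on the random local-polynomial weights, and moment bounds on $Y^a-r_a(X)$ conditional on $X$, all of which depend on the error only through its conditional moments given the covariate, which we have just bounded uniformly. Carrying these bounded conditional moments through their argument verbatim yields, for $a=1$, a random function $\ahat_{1,c}$ with $P\big(\ahat_{1,c}\in H_1(m,\alpha)\big)\to1$ for some $\alpha>0$, with $\int_{[0,\,1]^m}|\ahat_{1,c}(x)|^{1+b}g_1(x)\,dx=\opn$ for $b>m/(2s-m)$, and with $\sup_{x\in[0,\,1]^m}|\rhat_c(x)-r(x)-\ahat_{1,c}(x)|=\opn$; the identical argument applied to the response $Y^2$ and the target $r_2$ produces $\ahat_{2,c}$ with the three analogous properties under the additional moment assumption. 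Since $\rhat_c(x)=\hat\gamma_{1,0}$ and $\rhat_{2,c}(x)=\hat\gamma_{2,0}$ are exactly the local-polynomial estimators appearing in the cited lemma, nothing further is required.

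The main obstacle is therefore not any single estimate but the bookkeeping behind these two points: confirming that every step of the M\"uller--Schick--Wefelmeyer proof uses only conditional moment bounds, so that heteroskedasticity is harmless, and confirming that the passage to the complete-case subsample --- with its random size $N$ and its conditional design $G_1$ --- leaves the stated rates intact. Once these two routine verifications are carried out, the corollary follows.
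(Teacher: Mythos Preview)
Your proposal is correct and matches the paper's own treatment: the paper does not give a formal proof of this corollary but states that it follows by inspecting the proof of Lemma~1 in M\"uller, Schick and Wefelmeyer (2009) and making straightforward modifications for the scale function $\sigma$ and the indicators $\delta_1,\dots,\delta_n$, after noting (exactly as you do) that $G_1$ is quasi-uniform because $g_1=(\pi/E\delta)g$ with $\pi$ bounded and bounded away from zero, and that the moment requirement doubles from $q$ to $2q$ for the second moment $r_2$ via $Y^2=r^2(X)+\sigma^2(X)+2r(X)\sigma(X)e+\sigma^2(X)(e^2-1)$. Your more explicit use of the complete-case conditioning and the $N/n\to E\delta$ scaling is simply a fleshed-out version of the same argument.
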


Paraphrasing Remark 5 of M\"uller, Schick and Wefelmeyer (2009), there
is a trade-off between the required smoothness of the regression and
scale functions (indicated by the variable $s$) and the existence of
higher order moments for the error variable $e$ (indicated by
$q$). This means that higher degree polynomials, used to approximate
$r$ and $\sigma$, require higher order moments of $e$ to
exist. Further, we can see that a larger bandwidth may be used to
estimate these functions when they are smooth but a smaller bandwidth
will be required when these functions are rough. In light of the above
results, we are able to obtain analogous conclusions to those of Lemma
A.2 of Neumeyer and Van Keilegom (2010). 
\begin{proposition} \label{proprhatshat}
Suppose the first set of assumptions of Corollary \ref{corMSWlem1} are
satisfied. Then we have
\begin{equation*}
\int_{[0,\,1]^m} \frac{\rhat_c(x) - r(x)}{\sigma(x)}g_1(x)\,dx 
 = \frac1N\sj \delta_je_j + \opn.
\end{equation*}
Now suppose the additional assumptions of Corollary \ref{corMSWlem1}
are satisfied. Then we have
\begin{equation*}
\int_{[0,\,1]^m} \frac{\hat \sigma_c(x) - \sigma(x)}{\sigma(x)}g_1(x)\,dx
 = \frac1N\sj \delta_j\frac{e_j^2 - 1}{2} + \opn.
\end{equation*}
\end{proposition}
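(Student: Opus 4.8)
The strategy is to use the uniform expansions from Corollary \ref{corMSWlem1} to reduce each integral to an average over the complete cases, and then to recognize the resulting averages as the claimed leading terms. Throughout I work conditionally on the complete-case structure, so that $(X_1,Y_1),\ldots,(X_N,Y_N)$ behave as an i.i.d.\ sample from the conditional law of $(X,Y)$ given $\delta=1$, whose covariate marginal has density $g_1$.

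\emph{Step 1: the regression term.} Write
\[
\int_{[0,1]^m}\frac{\rhat_c(x)-r(x)}{\sigma(x)}g_1(x)\,dx
 = \int_{[0,1]^m}\frac{\ahat_{1,c}(x)}{\sigma(x)}g_1(x)\,dx
 + \int_{[0,1]^m}\frac{\rhat_c(x)-r(x)-\ahat_{1,c}(x)}{\sigma(x)}g_1(x)\,dx.
\]
The second integral is $\opn$ by the uniform bound in Corollary \ref{corMSWlem1} together with $\sigma$ being bounded away from zero (since $\sigma\in H(d,\varphi)$ is positive-valued and continuous on a compact set) and $g_1$ integrating to one. For the first integral, the point is that $\ahat_{1,c}$ is a \emph{stochastic expansion term} of the local polynomial smoother: by the construction in M\"uller, Schick and Wefelmeyer (2009), $\ahat_{1,c}(x)$ is (to the relevant order) a kernel-weighted average of the complete-case residuals $\delta_j(Y_j-r(X_j))=\delta_j\sigma(X_j)e_j$. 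Integrating such a local average against the density $g_1$ and interchanging the (finite) sum with the integral, the kernel weights integrate to one in the limit, so that $\int \ahat_{1,c}(x)\sigma(x)^{-1}g_1(x)\,dx$ collapses to $\frac1N\sj\delta_j e_j$ up to an $\opn$ remainder; the bias contribution is negligible because $\lambda_n\sim(n\log n)^{-1/(2s)}$ makes the deterministic approximation error $o(n^{-1/2})$ under $s>3m/2$, and the variance-type remainder is controlled by the $L^{1+b}$ bound on $\ahat_{1,c}$ in Corollary \ref{corMSWlem1}. This is exactly the computation behind Lemma A.2 of Neumeyer and Van Keilegom (2010), transferred to the complete-case subsample.

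\emph{Step 2: the scale term.} Since $\shat_c=(\rhat_{2,c}-\rhat_c^2)^{1/2}$ and $\sigma^2=r_2-r^2$, I expand $\shat_c(x)-\sigma(x)$ by the delta method. Writing $\shat_c^2-\sigma^2=(\rhat_{2,c}-r_2)-(\rhat_c^2-r^2)=(\rhat_{2,c}-r_2)-2r(\rhat_c-r)-(\rhat_c-r)^2$ and using $\shat_c-\sigma=(\shat_c^2-\sigma^2)/(\shat_c+\sigma)$, a Taylor expansion of $u\mapsto u^{1/2}$ gives, uniformly in $x$,
\[
\shat_c(x)-\sigma(x) = \frac{(\rhat_{2,c}(x)-r_2(x)) - 2r(x)(\rhat_c(x)-r(x))}{2\sigma(x)} + \opn,
\]
where the quadratic terms $(\rhat_c-r)^2$ and the remainder from the square-root expansion are $\opn$ because both $\rhat_c-r$ and $\rhat_{2,c}-r_2$ are $o_p(1)$ uniformly (indeed $o_p(n^{-1/4})$, more than enough). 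Dividing by $\sigma$, multiplying by $g_1$, and integrating, I substitute the expansions $\rhat_c-r=\ahat_{1,c}+\opn$ (uniformly) and $\rhat_{2,c}-r_2=\ahat_{2,c}+\opn$ (uniformly) from Corollary \ref{corMSWlem1}, so the problem again reduces to evaluating $\int \ahat_{1,c}\,g_1$ and $\int \ahat_{2,c}\,g_1$ against smooth weights. By the same linearization as in Step 1, $\int[\ahat_{2,c}(x)-2r(x)\ahat_{1,c}(x)](2\sigma(x)^2)^{-1}g_1(x)\,dx$ collapses to $\frac1N\sj\delta_j\{(Y_j^2-r_2(X_j))-2r(X_j)(Y_j-r(X_j))\}/(2\sigma(X_j)^2)+\opn$. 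Finally, using $Y_j=r(X_j)+\sigma(X_j)e_j$ one checks the algebraic identity $(Y_j^2-r_2(X_j))-2r(X_j)(Y_j-r(X_j)) = \sigma(X_j)^2 e_j^2 - (r_2(X_j)-r(X_j)^2) = \sigma(X_j)^2(e_j^2-1)$, so each summand is exactly $(e_j^2-1)/2$, yielding the claimed expression.

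\emph{Main obstacle.} The routine parts are the delta-method expansion of $\shat_c$ and the algebraic identity in Step 2. The genuine work is Step 1's assertion that integrating the smoother's stochastic expansion term $\ahat_{1,c}$ against $g_1$ reproduces the complete-case average of the errors with an $\opn$ error — i.e.\ that the kernel weights "integrate out" cleanly and that both the smoothing bias and the higher-order stochastic remainder are $o(n^{-1/2})$. This is where the precise form of $\ahat_{1,c}$ from M\"uller, Schick and Wefelmeyer (2009), the bandwidth rate $\lambda_n\sim(n\log n)^{-1/(2s)}$, the smoothness condition $s>3m/2$, and the $L^{1+b}$ control with $b>m/(2s-m)$ all enter; it must be carried out carefully on the complete-case subsample, invoking that $g_1$ is quasi-uniform (as established in the discussion preceding Corollary \ref{corMSWlem1}) so that the M\"uller--Schick--Wefelmeyer machinery applies verbatim.
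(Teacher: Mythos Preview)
Your proposal is correct and follows essentially the same route the paper indicates: the paper does not spell out a proof but directs the reader to the proof of Lemma 1 in M\"uller, Schick and Wefelmeyer (2009), noting that the only modifications are the presence of $\sigma$ and of the indicators $\delta_1,\ldots,\delta_n$, together with the moment strengthening from $q$ to $2q$ needed for $r_2$ via the identity $Y^2=r^2(X)+\sigma^2(X)+2r(X)\sigma(X)e+\sigma^2(X)(e^2-1)$. Your Step 1 (replace $\rhat_c-r$ by $\ahat_{1,c}$ via the uniform bound, then integrate the kernel-weighted stochastic term against $g_1/\sigma$ so that the smooth weight localizes to $X_j$) and your Step 2 (delta-method expansion of $\shat_c-\sigma$ in terms of $\rhat_{2,c}-r_2$ and $\rhat_c-r$, followed by the same linearization and the algebraic identity reducing the summand to $(e_j^2-1)/2$) are exactly this programme, and you have correctly flagged the only substantive point---that integrating $\ahat_{1,c}$ (resp.\ $\ahat_{2,c}$) against a smooth weight times $g_1$ reproduces the complete-case average with an $o_p(n^{-1/2})$ remainder---as the place where the MSW machinery, the bandwidth rate, and the $L^{1+b}$ control are actually used.
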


\begin{remark} \label{remrhatrhat2}
Analogous results to Corollary \ref{corMSWlem1} above hold for the
full model where $\delta_j = 1$, $j=1,\ldots,n$, and $N = n$, almost
surely, in both cases where the covariate distribution is $G$ and
$G_1$. Here the local polynomial estimators $\rhat$ for $r$ (with
associated $\ahat_1$) and $\rhat_2$ for $r_2$ (with associated
$\ahat_2$) are respectively defined exactly as $\rhat_c$ and
$\rhat_{2,c}$ are defined above, but now the indicators
$\delta_1,\ldots,\delta_n$ are all equal to one. Hence, we obtain
estimators $\rhat$ for $r$  and $\shat$ for $\sigma$ for which
analogous results of Proposition \ref{proprhatshat} hold in the full
model in both cases where the covariate distribution is $G$ and
$G_1$. The case of covariates having distribution $G$ confirms the
findings of Lemma A.2 of Neumeyer and Van Keilegom (2010), which are
required to prove their main result.
\end{remark}

As noted in Remark 1 of Chown and M\"uller (2013), one proves the
above statements analogously to how M\"uller, Schick and Wefelmeyer
(2009) prove their results (inspect the proof of Lemma 1 of that
paper). The only changes occur by introducing $\sigma$ and the
indicators $\delta_1,\ldots,\delta_n$. Since we also estimate $r_2$,
this requires strengthening the moment conditions on the error
variable $e$ from $q$ to $2q$ because $Y^2 = r^2(X) + \sigma^2(X) +
2r(X)\sigma(X)e + \sigma^2(X)(e^2 - 1)$, which follows from the model
equation above. An additional requirement needed by Neumeyer and Van
Keilegom (2010) for their results to hold is that $\sup_{t \in \R}
|t^2F''(t)| < \infty$. This assumption implies the curvature of the
function space underlying the probability model is finite. However, we
can measure this curvature using Fisher information. This means we can
merely assume that $F$ has finite Fisher information for both location
and scale, written as Assumption \ref{AssumpF} below, which is a
lighter assumption than $\sup_{t \in \R} |t^2F''(t)| < \infty$. We now
arrive at our third auxiliary result, which confirms the results of
Neumeyer and Van Keilegom (2010). The proof of this result is rather
elaborate and technical. Therefore, it is held to Section
\ref{Appendix}.

\begin{assumption} \label{AssumpF}
The error density $f$ is absolutely continuous with almost everywhere 
derivative $f'$ and finite Fisher information for both location and 
scale; i.e.\
\begin{equation*}
\int_{-\infty}^{\infty} \big\{1 + z^2\big\}
 \bigg\{\frac{f'(z)}{f(z)}\bigg\}^2\,F(dz) 
 < \infty.
\end{equation*}
\end{assumption}
\begin{theorem}[{\sc expansion for the full model}] 
\label{ThmFhatFullmodel2}
Assume the covariates $X$ are distributed according to $G$. Let the
required assumptions of Proposition \ref{proprhatshat} be satisfied
concerning the local polynomial estimators $\rhat$ and $\rhat_2$ (see
Remark \ref{remrhatrhat2} above). Further, let Assumption
\ref{AssumpF} hold with the error density $f$ additionally satisfying
$\sup_{t \in \R} f(t) < \infty$ and $\sup_{t \in \R} |tf(t)| <
\infty$. Then, for  $\hve_1 = \{Y_1 - \rhat(X_1)\}/\shat(X_1),\ldots,
\hve_n = \{Y_n - \rhat(X_n)\}/\shat(X_n)$, we have
\begin{equation*}
\sup_{t \in \R} \bigg| \avj \bigg[
 \1\big[\hve_j \leq t\big] - \1\big[e_j \leq t\big]
 - f(t)\bigg\{e_j + \frac{t}{2}\big(e_j^2 - 1\big)\bigg\}\bigg] 
 \bigg| = \opn. 
\end{equation*}
\end{theorem}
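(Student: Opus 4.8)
The plan is to linearise the residuals, reduce the statement to Corollary~\ref{corMSWlem1} and Proposition~\ref{proprhatshat}, and then combine two empirical-process (Donsker-class) arguments with one direct analytic bound on a drift term. Since $Y_j=r(X_j)+\sigma(X_j)e_j$, the residuals satisfy $\hve_j=\{\sigma(X_j)/\shat(X_j)\}(e_j-u_j)$ with $u_j:=\{\rhat(X_j)-r(X_j)\}/\sigma(X_j)$; hence, on the event $\{\shat(X_j)>0,\ j=1,\dots,n\}$ (which has probability tending to one, as $\shat\to\sigma$ uniformly in probability and $\inf\sigma>0$), $\1[\hve_j\le t]=\1[e_j\le\rho_jt+u_j]$ with $\rho_j:=\shat(X_j)/\sigma(X_j)$. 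Writing $\tau_j(t):=\rho_jt+u_j=t+\{(\rho_j-1)t+u_j\}$, I would decompose
\[
\avj\big(\1[\hve_j\le t]-\1[e_j\le t]\big)=\mathbb{D}_n(t)+B_n(t),
\]
where $\mathbb{D}_n(t)=\avj\big(\1[e_j\le\tau_j(t)]-\1[e_j\le t]-F(\tau_j(t))+F(t)\big)$ and $B_n(t)=\avj\big(F(\tau_j(t))-F(t)\big)$.

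The inputs are supplied by Corollary~\ref{corMSWlem1} in its full-model form (covariates $\sim G$, $N=n$; see Remark~\ref{remrhatrhat2}): $\rhat-r=\ahat_1+\opn$ uniformly, and, writing $\shat-\sigma=(\shat^2-\sigma^2)/(\shat+\sigma)$ with $\shat^2=\rhat_2-\rhat^2$, $\sigma^2=r_2-r^2$, also $\shat-\sigma=\ahat_\sigma+\opn$ uniformly with $\ahat_\sigma:=(\ahat_2-2r\ahat_1)/(2\sigma)$. Thus $\rho_j-1=\ahat_\sigma(X_j)/\sigma(X_j)+\opn$ and $u_j=\ahat_1(X_j)/\sigma(X_j)+\opn$ uniformly in $j$, where $\ahat_1,\ahat_\sigma$ lie with probability tending to one in a fixed ball of a H\"older class $H(m,\alpha)$ on $[0,1]^m$ — Donsker since $m+\alpha>m/2$ — and $\int(|\ahat_1|^{1+b}+|\ahat_\sigma|^{1+b})g\,dx=\opn$ for a $b$ that may be fixed in $(m/(2s-m),\,1/2]$, the interval being non-empty because $s>3m/2$. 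Interpolating this refined smallness against $\|\ahat_1\|_\infty+\|\ahat_\sigma\|_\infty=O_p(1)$ gives $\int(|\rho-1|^{3/2}+|u|^{3/2}+|\rho-1|^2+|u|^2)g\,dx=\opn$, where $\rho=\shat/\sigma$ and $u=(\rhat-r)/\sigma$; in particular $\rho-1\to 0$ and $u\to0$ in $L_2(G)$ in probability.

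For $\mathbb{D}_n$ I would use asymptotic equicontinuity: the functions $h_{t,\rho,u}(x,e)=\1[e\le\rho(x)t+u(x)]-\1[e\le t]-F(\rho(x)t+u(x))+F(t)$, indexed by $t\in\R$ and by $(\rho,u)$ near $(\1,0)$ in the above H\"older balls, form a Donsker class — obtained from the VC class $\{\1[e\le s]:s\in\R\}$ and the Lipschitz map $F$ composed with the Donsker class of functions $(x,t)\mapsto\rho(x)t+u(x)$, by the usual permanence properties — whose $L_2(P)$-radius about the null function vanishes as $(\rho,u)\to(\1,0)$, since $E\,h_{t,\rho,u}^2\le E|F(\rho(X)t+u(X))-F(t)|\le\|f\|_\infty\big(|t|\,E|\rho(X)-1|+E|u(X)|\big)$ for bounded $t$ while being dominated by $\bar F(|t|/2)+F(-|t|/2)$ for large $t$, so $\sup_tE\,h_{t,\rho,u}^2\to0$; evaluating the empirical process at the random index $(\rho_j,u_j)$ then yields $\sup_t|\sqrt n\,\mathbb{D}_n(t)|=\op$. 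Only $\sup f<\infty$ and $E[e^2]<\infty$ enter here — this is the step where Neumeyer and Van Keilegom invoke $\sup_t|t^2F''(t)|<\infty$, which Assumption~\ref{AssumpF} replaces. For $B_n$, the first-order expansion $F(\tau_j(t))-F(t)=f(t)\{(\rho_j-1)t+u_j\}+R_j(t)$ gives $B_n(t)=f(t)\big\{t\,\avj(\rho_j-1)+\avj u_j\big\}+\avj R_j(t)$. Equicontinuity over the Donsker class $\{a/\sigma:a\ \text{in a ball of}\ H(m,\alpha)\}$ converts the sample averages to integrals, $\avj(\rho_j-1)=\int\frac{\shat(x)-\sigma(x)}{\sigma(x)}g(x)\,dx+\opn$ and $\avj u_j=\int\frac{\rhat(x)-r(x)}{\sigma(x)}g(x)\,dx+\opn$, and Proposition~\ref{proprhatshat} (full model, covariates $\sim G$) identifies these integrals as $\avj(e_j^2-1)/2+\opn$ and $\avj e_j+\opn$; since $|f(t)|$ and $|tf(t)|$ are bounded, the leading part of $B_n(t)$ equals $f(t)\,\avj\{e_j+\tfrac t2(e_j^2-1)\}+\opn$ uniformly in $t$.

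What remains, and what I expect to be the main obstacle (and the reason the proof is postponed), is $\sup_t|\avj R_j(t)|=\opn$. One more use of the equicontinuity argument — applied to the $x$-smooth integrand $R(t,\cdot)$, which vanishes at $(\rho,u)=(\1,0)$ — replaces $\avj R_j(t)$ by $\int R(t,x)g(x)\,dx=\int\big[F(\rho(x)t+u(x))-F(t)-f(t)\{(\rho(x)-1)t+u(x)\}\big]g(x)\,dx$ up to $\opn$ uniformly in $t$, so the task is to bound this integral uniformly over \emph{all} $t\in\R$ without $\sup_t|t^2F''(t)|<\infty$. The analytic facts I would exploit: Assumption~\ref{AssumpF} forces $\sqrt f$, hence $f$, into $H^1(\R)$, so $f$ is $\tfrac12$-H\"older; finite Fisher information for scale together with $E[e^2]=1$ forces $sf(s)$ into $H^1(\R)$ and yields $\int_\R t^2|f'(t)|\,dt<\infty$; and the linearisations obey the refined smallness above. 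Splitting $(\rho(x)-1)t+u(x)$ into a location part $u(x)$ and a scale part $(\rho(x)-1)t$, the location part is handled at once, $|F(t+u(x))-F(t)-f(t)u(x)|$ being at most a constant times $|u(x)|^{3/2}$, which integrates to $\opn$ uniformly. For the scale part, where $\sup_t|t^2f'(t)|$ is unavailable, I would instead control $t\mapsto\int R(t,x)g(x)\,dx$ through its derivative, using $\sup_t|\int R(t,x)g\,dx|\le\int_\R|\tfrac{d}{dt}\int R(t,x)g(x)\,dx|\,dt$ (the integrand vanishing at $t=-\infty$), reorganising the $t$-derivative in terms of $sf(s)$ and $t^2f'(t)$ and bounding the resulting $t$-integrals by $\int_\R|f'|$, $\int_\R|sf'(s)|\,ds$ and $\int_\R t^2|f'(t)|\,dt$ times the $L_2(G)$-smallness of $\rho-1$ and $u$, with the ranges of moderate and very large $t$ separated via $f(t)\le C/|t|$ and the tail bound $\bar F(|t|/2)+F(-|t|/2)\le C|t|^{-q}$ coming from the finite moment of $e$ in Corollary~\ref{corMSWlem1}. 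Reconciling the moderate-$t$ H\"older estimates with the large-$t$ tail estimates using only finite Fisher information is the delicate point; together with the bounds on $\mathbb{D}_n$ and on the leading part of $B_n$, it gives the asserted expansion.
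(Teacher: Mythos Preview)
Your decomposition $\mathbb{D}_n+B_n$ is equivalent in spirit to the paper's $R_1+R_2+R_3$, and your handling of $\mathbb{D}_n$ via a Donsker class and of the leading part of $B_n$ via Proposition~\ref{proprhatshat} matches the paper's treatment of $R_1$ and $R_3$. Two points, however, deserve comment.

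\textbf{A technical gap you gloss over.} The random functions $\rho=\shat/\sigma$ and $u=(\rhat-r)/\sigma$ do \emph{not} themselves lie in the H\"older ball $H_1(m,\alpha)$; only the approximants $\ahat_1,\ahat_2$ (hence $\ahat_\sigma$) do. Your equicontinuity argument for $\mathbb{D}_n$ therefore needs a bracketing step: the paper parametrises its Donsker class by $(t,a_1,a_2)$ with $a_1,a_2\in H_1(m,\alpha)$, and then sandwiches the true residuals between members $\fmem^{++}_{n,t,\ahat_1,\ahat_2}$ and $\fmem^{--}_{n,t,\ahat_1,\ahat_2}$ obtained by shifting $t\mapsto t(1\pm n^{-1/2}u_n)\pm n^{-1/2}v_n$, where $u_n,v_n=o_p(1)$ absorb the uniform discrepancies $\rhat-r-\ahat_1$, $\rhat_2-r_2-\ahat_2$, $(\rhat-r)^2$ and $(\shat-\sigma)^2$. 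You would need the same device.

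\textbf{The genuine gap: your scale-remainder bound is one order too weak.} Your proposal to control $\Psi(t):=\int R(t,x)g(x)\,dx$ via $\sup_t|\Psi(t)|\le\int_\R|\Psi'(t)|\,dt$ and to bound the $t$-integral by finite quantities times ``the $L_2(G)$-smallness of $\rho-1$'' cannot deliver $o_p(n^{-1/2})$. For the pure scale part with $\beta=\rho-1$, one has
\[
\tfrac{d}{dt}\bigl[F((1{+}\beta)t)-F(t)-\beta t f(t)\bigr]=(1{+}\beta)\bigl[f((1{+}\beta)t)-f(t)\bigr]-\beta t f'(t),
\]
and both terms are \emph{first order} in $\beta$: integrating over $t$ and then $x$ yields at best $C\int|\beta|\,g=O\bigl(\|\rho-1\|_{L_2(G)}\bigr)=o_p(n^{-1/4})$. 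A second-order bound would require controlling differences of $f'$, which finite Fisher information does not give. The paper's device is different and is the crux of the argument: from Assumption~\ref{AssumpF} and $\sup_t|tf(t)|<\infty$ it derives the \emph{weighted} Lipschitz inequality
\[
\bigl(1+\min(x,y)^2\bigr)\,|f(y)-f(x)|\ \le\ C\,|y-x|,
\]
via $(1+y^2)f(y)-(1+x^2)f(x)=\int_x^y(1+v^2)f'(v)\,dv+2\int_x^y vf(v)\,dv$ and Cauchy--Schwarz against $\int(1+v^2)(f'/f)^2f$. This yields, for $|\beta|$ small and $s$ between $t$ and $(1+\beta)t$, the bound $|f(s)-f(t)|\le C|\beta t|/(1+t^2\min\{1,(1+\beta)^2\})$, hence $|F((1{+}\beta)t)-F(t)-\beta tf(t)|\le C'\beta^2$ \emph{uniformly in $t$}. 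Integrating in $x$ then gives $\int B^2\,g=o_p(n^{-1/2})$, which is exactly the required rate. Your $H^1$ facts ($f$ is $\tfrac12$-H\"older, $sf(s)\in H^1$, $\int t^2|f'|<\infty$) are correct but do not, by themselves, recover this uniform second-order control; the weighted Lipschitz bound is the missing idea.

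A minor structural difference: the paper centers $R_1$ by the conditional expectation $E[F(t+\cdots)\mid\mathbb{D}]$ rather than by the individual $F(\tau_j(t))$. This puts the drift term $R_2$ in integral form from the outset and avoids the extra equicontinuity step you need to pass from $\avj R_j(t)$ to $\int R(t,x)g(x)\,dx$.
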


We now adapt the results of Theorem \ref{ThmFhatFullmodel2} to the MAR
model using the {\em transfer principle for complete case statistics}
given in Koul, M\"uller and Schick (2012). Expanding on the
observations of Chown and M\"uller (2013), it follows that we can
factor the joint distribution of $(X,Y)$ into two components: the
distribution $G$ of the covariates $X$ and the conditional
distribution of the responses $Y$ given $X$, i.e.\ the distribution
$F$ of the errors $e$. Now, using the MAR assumption, we observe that
$Y$ and $\delta$ are independent given $X$. This implies only the
distribution $G$ changes to $G_1$ when moving from full model to the
MAR model, e.g.\ complete case statistics are based on observations
$(X,Y)$ with a joint conditional distribution given $\delta = 1$,
which can now be factored into $G_1$ and $F$. Hence, the functionals
$F$, $r$ and $\sigma$ remain the same in the MAR model. This implies
the complete case statistic $\hedf_c$ is a consistent estimator for
$F$ in the MAR model. However, in order to apply the transfer
principle, we need to restate the result of Theorem
\ref{ThmFhatFullmodel2} for covariates that have distribution $G_1$,
which corresponds to the data used in our complete case estimator
$\hedf_c$. The proof of this result follows immediately from the proof
of Theorem \ref{ThmFhatFullmodel2} (see Section \ref{Appendix})
with the discussion in Remark \ref{remrhatrhat2}
above.
\begin{corollary}[{\sc expansion for the full model using $G_1$}]
\label{corFhatExpanG1} Assume the covariates $X$ are distributed
according to $G_1$, and $\pi$ is almost everywhere bounded away from
zero on $[0,\,1]^m$. Let the required assumptions of Proposition
\ref{proprhatshat} be satisfied concerning the local polynomial
estimators $\rhat$ and $\rhat_2$ (see Remark \ref{remrhatrhat2}
above). Further, let Assumption \ref{AssumpF} hold with the error
density $f$ additionally satisfying $\sup_{t \in \R} f(t) < \infty$
and $\sup_{t \in \R} |tf(t)| < \infty$. Then, for  $\hve_1
= \{Y_1 - \rhat(X_1)\}/\shat(X_1),\ldots,\hve_n = \{Y_n -
\rhat(X_n)\}/\shat(X_n)$, we have
\begin{equation*}
\sup_{t \in \R} \bigg| \avj \bigg[
 \1\big[\hve_j \leq t\big] - \1\big[e_j \leq t\big]
 - f(t)\bigg\{e_j + \frac{t}{2}\big(e_j^2 - 1\big)\bigg\}\bigg] 
 \bigg| = \opn. 
\end{equation*}
\end{corollary}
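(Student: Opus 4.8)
The plan is to recognize that Corollary~\ref{corFhatExpanG1} is the statement of Theorem~\ref{ThmFhatFullmodel2} with the covariate law $G$ replaced by $G_1$, and that the proof of the theorem transfers word for word once one checks that $G_1$ shares the only property of $G$ that the proof genuinely uses: quasi-uniformity. First I would verify this. From the identity $G_1(dx) = \{\pi(x)/E\delta\}\,G(dx)$, the bounds $0 < \pi(x) \leq 1$, the positivity of $E\delta = E[\pi(X)]$, the a.e.\ lower bound on $\pi$ on $[0,1]^m$, and Assumption~\ref{AssumpG} for $G$, the density $g_1$ is bounded and bounded away from zero on $[0,1]^m$; hence $G_1$ satisfies Assumption~\ref{AssumpG}. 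Consequently Corollary~\ref{corMSWlem1} and Proposition~\ref{proprhatshat} hold verbatim for the local polynomial estimators $\rhat$, $\rhat_2$ (with associated $\ahat_1$, $\ahat_2$) built from $G_1$-distributed covariates --- this is precisely the content of Remark~\ref{remrhatrhat2} --- so every ingredient of the proof of Theorem~\ref{ThmFhatFullmodel2} is available in the present setting.

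Next I would rerun that proof. Setting $u_j = \{\rhat(X_j) - r(X_j)\}/\sigma(X_j)$ and $v_j = \{\shat(X_j) - \sigma(X_j)\}/\sigma(X_j)$, the model equation gives $\hve_j = (e_j - u_j)/(1 + v_j)$ with $1 + v_j = \shat(X_j)/\sigma(X_j) > 0$, so $\1[\hve_j \leq t] = \1[e_j \leq t + u_j + t v_j]$. I would then split
\begin{align*}
\avj\big\{\1[\hve_j \leq t] - \1[e_j \leq t]\big\}
 &= \avj\big\{\1[e_j \leq t + u_j + t v_j] - \1[e_j \leq t] - F(t + u_j + t v_j) + F(t)\big\} \\
 &\quad {} + \avj\big\{F(t + u_j + t v_j) - F(t)\big\}
\end{align*}
and control each piece uniformly in $t \in \R$. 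The first, an empirical-process oscillation term, is $\opn$ by stochastic equicontinuity of the empirical process indexed by half-lines together with a smooth H\"older class, using $\sup_x|\rhat(x) - r(x) - \ahat_1(x)| = \opn$, $P(\ahat_1 \in H_1(m,\alpha)) \to 1$, $\int|\ahat_1|^{1+b}g_1 = \opn$, and the analogues for $\ahat_2$ (hence for $v_j$), exactly as in the proof of Lemma~1 of M\"uller, Schick and Wefelmeyer (2009). For the second, a Taylor expansion of $F$ gives $F(t + u_j + t v_j) - F(t) = f(t)(u_j + t v_j) + R_j(t)$; I would replace $\avj u_j$ and $\avj v_j$ by the $G_1$-integrals $\int\{\rhat - r\}/\sigma\,g_1$ and $\int\{\shat - \sigma\}/\sigma\,g_1$ (the difference being $\opn$ by the same H\"older-class empirical-process bound applied to $\ahat_1$, $\ahat_2$) and then insert Proposition~\ref{proprhatshat} --- valid under $G_1$ by Remark~\ref{remrhatrhat2} --- to obtain $\avj u_j = \avj e_j + \opn$ and $\avj v_j = \avj (e_j^2 - 1)/2 + \opn$. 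Collecting the remainders $R_j(t)$ and the leftover empirical-average discrepancies then yields the claimed expansion.

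The hard part will be the uniformity in $t$ over all of $\R$: bounding $\sup_{t\in\R}|\avj R_j(t)|$ and the oscillation term simultaneously in the tails, where the scale perturbation $t v_j$ can be sizeable even though $v_j$ is small. This is exactly where the extra hypotheses enter. The bounds $\sup_t f(t) < \infty$ and $\sup_t|t f(t)| < \infty$ control the linear term $f(t)(u_j + t v_j)$, while the finite Fisher information for location and scale of Assumption~\ref{AssumpF} --- which replaces the stronger $\sup_t|t^2 F''(t)| < \infty$ of Neumeyer and Van Keilegom (2010) --- supplies the integrability of $f$ and $f'$ (equivalently, finiteness of $\int\{1 + z^2\}\{f'(z)/f(z)\}^2\,F(dz)$) needed to dominate $R_j(t)$ and to run the equicontinuity argument through the tails. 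Since none of these steps refers to the covariate distribution beyond its quasi-uniformity, replacing $G$ by $G_1$ costs nothing, and the corollary follows.
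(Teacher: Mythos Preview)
Your proposal is correct and takes essentially the same approach as the paper, which simply notes that the proof of Theorem~\ref{ThmFhatFullmodel2} carries over verbatim with $G$ replaced by $G_1$ (via Remark~\ref{remrhatrhat2}), the quasi-uniformity of $G_1$ having already been established just before Corollary~\ref{corMSWlem1}. Your per-observation centering $F(t+u_j+tv_j)$ is a minor variant of the paper's centering by the integral $E[F(t+\cdots)\mid\mathbb{D}]$ in the decomposition $R_1+R_2+R_3$, but the underlying Donsker-class equicontinuity argument and the appeal to Proposition~\ref{proprhatshat} are the same.
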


Combining the results above with the transfer principle for complete
case statistics, we can immediately derive the expansion of our
complete case estimator $\hedf_c$. We investigate the efficiency bound
for regular estimators of $F$ in the MAR model in Section
\ref{Efficiency} below, i.e.\ estimators whose limit distributions do
not depend on any direction of approach. In Corollary \ref{corFeff}
(see Section \ref{Efficiency} below), we provide the efficient
influence function characterizing the class of efficient estimators of
$F$ in the MAR model. Since the influence function of our complete
case estimator $\hedf_c$ matches the efficient influence function,
this characterizes $\hedf_c$ as an efficient estimator for $F$ in the
MAR model, which implies that $\hedf_c$ is an asymptotically most
precise (least dispersed) estimator. We now arrive at the main result
of this section:
\begin{theorem}[{\sc expansion for the MAR model}] \label{ThmFhatMARmodel}
Consider the heteroskedastic nonparametric regression model with responses 
missing at random. Let Assumption \ref{AssumpF} hold with the error
density $f$ additionally satisfying $\sup_{t \in \R} f(t) < \infty$
and $\sup_{t \in \R} |tf(t)| < \infty$, and let the assumptions of
Corollary \ref{corMSWlem1} hold. Then the complete case estimator
$\hedf_c$ of the error distribution function $F$ satisfies the uniform
stochastic expansion
\begin{equation*} 
\sup_{t \in \R} \bigg| \hedf_c 
 - \frac1N\sj \delta_j \1\big[e_j \leq t\big]
 - \frac1N\sj \delta_j f(t)
 \bigg\{e_j + \frac{t}{2}\big(e_j^2 - 1\big)\bigg\}
 \bigg| = \opn.
\end{equation*}
Furthermore, $\hedf_c$ is asymptotically linear, uniformly in $t \in
\R$, with influence function 
\begin{equation*}
\phi(\delta, e, t) = \frac{\delta}{E\delta}\bigg[
 \1\big[e \leq t\big] - F(t)
 + f(t)\bigg\{e + \frac{t}{2}\big(e^2 - 1\big)\bigg\}\bigg], 
\end{equation*}
and $\hedf_c$ is asymptotically efficient, in the sense of H\'ajek and
Le Cam, for estimating $F$. 
\end{theorem}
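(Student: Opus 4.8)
The plan is to obtain the stated expansion by feeding the full-model expansion of Corollary~\ref{corFhatExpanG1} into the transfer principle for complete case statistics of Koul, M\"uller and Schick (2012), and then to deduce efficiency by comparing the resulting influence function with the efficient influence function of Corollary~\ref{corFeff}. The starting point is the observation, already made before Corollary~\ref{corFhatExpanG1}, that the joint law of $(X,Y)$ factors into the covariate law $G$ and the error law $F$, that under MAR only $G$ changes to $G_1$ on passing to the complete cases, and that the functionals $F$, $r$ and $\sigma$ are therefore unchanged; consequently $\hedf_c$ is exactly the complete case version of the residual empirical distribution function treated, in the full model with covariate law $G_1$, in Corollary~\ref{corFhatExpanG1}.

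First I would recast the full-model expansion in influence-function form. Writing
\begin{equation*}
\psi(e,t) = \1\big[e \le t\big] - F(t) + f(t)\Big\{e + \tfrac{t}{2}\big(e^2 - 1\big)\Big\},
\end{equation*}
one has $E\psi(e,t) = 0$ for every $t$, since $E\1[e \le t] = F(t)$, $Ee = 0$ and $E(e^2 - 1) = 0$. Because $\hedf(t) = \avj\1[\hve_j \le t]$ and $\avj\psi(e_j,t) = \avj\1[e_j \le t] - F(t) + f(t)\avj\{e_j + \tfrac{t}{2}(e_j^2 - 1)\}$, the difference $\hedf(t) - F(t) - \avj\psi(e_j,t)$ equals exactly the quantity inside the supremum in Corollary~\ref{corFhatExpanG1} (the two $F(t)$ terms cancel). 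Hence that corollary gives, in the full model with covariates distributed according to $G_1$,
\begin{equation*}
\sup_{t \in \R}\Big|\hedf(t) - F(t) - \avj\psi(e_j,t)\Big| = \opn,
\end{equation*}
i.e.\ $\hedf$ is asymptotically linear, uniformly in $t$, with mean-zero influence function $\psi$.

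Next I would invoke the transfer principle for complete case statistics, either applied to the residual empirical process regarded as a random element of $\ell^\infty(\R)$, or applied pointwise in $t$ with the uniformity carried over from the already-uniform remainder above, exactly as in the homoskedastic treatment of Chown and M\"uller (2013). Since the complete cases behave like an i.i.d.\ sample of random size $N$ from the conditional law given $\delta = 1$, and $N/n \to E\delta > 0$ almost surely, the transfer principle turns the previous display into
\begin{equation*}
\sup_{t \in \R}\Big|\hedf_c(t) - F(t) - \avj\frac{\delta_j}{E\delta}\psi(e_j,t)\Big| = \opn,
\end{equation*}
which is precisely the asserted asymptotic linearity of $\hedf_c$, uniformly in $t$, with influence function $\phi(\delta,e,t) = (\delta/E\delta)\psi(e,t)$. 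To pass from this to the stochastic expansion displayed in the theorem I would use that $\delta_j$ and $e_j$ are independent (a consequence of the MAR assumption and $e \perp X$, equivalently of $E[\delta h(e)] = E\delta\,E[h(e)]$), the standard bound $\sup_{t \in \R}\big|\avj\delta_j(\1[e_j \le t] - F(t))\big| = O_p(n^{-1/2})$ for the $\delta$-weighted empirical process, and $n/N \to 1/E\delta$ almost surely, to write, uniformly in $t$,
\begin{equation*}
\frac1N\sj\delta_j\1\big[e_j \le t\big] = F(t) + \frac1{E\delta}\avj\delta_j\big(\1[e_j \le t] - F(t)\big) + \opn,
\end{equation*}
and similarly $\tfrac1N\sj\delta_j f(t)\{e_j + \tfrac{t}{2}(e_j^2 - 1)\} = \tfrac1{E\delta}\avj\delta_j f(t)\{e_j + \tfrac{t}{2}(e_j^2 - 1)\} + \opn$ uniformly in $t$, where now $\sup_{t}f(t) < \infty$, $\sup_{t}|tf(t)| < \infty$ and the finite higher moments of $e$ furnished by Corollary~\ref{corMSWlem1} provide the required $O_p(n^{-1/2})$ control. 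Subtracting these two identities from the asymptotic linearity of $\hedf_c$ leaves exactly the $\opn$ remainder asserted in the theorem.

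Finally, for the efficiency claim I would compare $\phi$ with the efficient influence function for estimating $F$ in the MAR model derived in Corollary~\ref{corFeff} of Section~\ref{Efficiency}; the two coincide. An estimator that is asymptotically linear with influence function equal to the canonical gradient is automatically regular and attains the H\'ajek--Le Cam convolution bound, so $\hedf_c$ is asymptotically efficient. I expect the single genuinely delicate point to be the passage of the $\opn$ remainder through the transfer principle uniformly in $t$ rather than for each fixed $t$; this is exactly the difficulty that was resolved, for the simpler homoskedastic model, by Chown and M\"uller (2013), and it is again tractable here because the uniform-in-$t$ expansion of the full-model statistic is already available from Corollary~\ref{corFhatExpanG1}. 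All of the intricate smoothing-bias analysis for the local polynomial estimators $\rhat_c$ and $\shat_c$ has already been absorbed into Corollary~\ref{corMSWlem1} and Proposition~\ref{proprhatshat}, so no further work on those estimators is required at this stage.
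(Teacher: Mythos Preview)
Your proposal is correct and follows essentially the same route as the paper: invoke the full-model expansion under $G_1$ (Corollary~\ref{corFhatExpanG1}), apply the transfer principle of Koul, M\"uller and Schick (2012) to obtain the complete case expansion, and then identify the influence function with the efficient one from Corollary~\ref{corFeff}. The only cosmetic difference is the order in which the two equivalent forms of the expansion are obtained: the paper applies the transfer principle to get the $1/N$ form first and then notes its equivalence with the $\delta_j/(nE\delta)$ influence-function form, whereas you first write the full-model result in influence-function form, transfer that, and then convert back to the $1/N$ display; your version spells out the $N/n \to E\delta$ step that the paper leaves implicit.
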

\begin{proof}
The assumptions of Theorem \ref{ThmFhatFullmodel2} and Corollary
\ref{corFhatExpanG1} are satisfied. Hence, for the full model, we have
\begin{equation*}
\sup_{t \in \R} \bigg| \avj \bigg[
 \1\big[\hve_j \leq t\big] - \1\big[e_j \leq t\big]
 - f(t)\bigg\{e_j + \frac{t}{2}\big(e_j^2 - 1\big)\bigg\}\bigg] \bigg| 
 = \opn,
\end{equation*}
when the covariates $X$ are distributed under either $G$ or
$G_1$. Since $\hedf_c$ is the complete case version of the estimator
in the display above, it follows from Remark 2.5 of Koul, M\"uller
and Schick (2012) for the first assertion to hold, i.e.\
\begin{equation*} 
\sup_{t \in \R} \bigg| \hedf_c 
 - \frac1N\sj \delta_j \1\big[e_j \leq t\big]
 - \frac1N\sj \delta_j f(t)
 \bigg\{e_j + \frac{t}{2}\big(e_j^2 - 1\big)\bigg\}
 \bigg| = \opn.
\end{equation*}
This expansion is equivalent to 
\begin{equation*}
\sup_{t \in \R} \bigg| \avj \frac{\delta_j}{E\delta}\bigg[
 \1\big[\hve_{j,c} \leq t\big] - \1\big[e_j \leq t\big]
 - f(t)\bigg\{e_j + \frac{t}{2}\big(e_j^2 - 1\big)\bigg\}\bigg] \bigg| 
 = \opn,
\end{equation*}
and we find, uniformly in $t \in \R$,
\begin{equation*}
\hedf_c(t) = \avj \frac{\delta_j}{E\delta}
 \1\big[\hve_{j,c} \leq t\big] + \opn
 = F(t) + \avj \phi(\delta_j, e_j, t) + \opn, 
\end{equation*}
where the function $\phi(\delta, e, t) = (\delta/E\delta)[\1[e \leq t] 
- F(t) + f(t)\{e + t/2(e^2 - 1)\}]$ is the influence function for 
$\hedf_c$. Since the assumptions of Corollary \ref{corFeff} in Section
\ref{Efficiency} below are satisfied, it follows for the influence
function $\phi$ to be the efficient influence function for estimating
$F$, which concludes the proof.
\end{proof}

We note the uniform expansion above implies the existence of a
functional central limit theorem. In addition, the property that
$\hedf_c$ is efficient means that competing estimators will not
achieve higher precision for large samples. This includes estimators
that employ imputation approaches to estimate the missing responses. A
consequence of this conclusion is that imputation procedures employed
to estimate $F$ may only be effective in small samples. Therefore, we
recommend the use of the complete case estimator $\hedf_c$ for
conducting various hypothesis tests concerning the heteroskedastic MAR
model. Section \ref{Efficiency} details the remaining results
necessary for proving Theorem \ref{ThmFhatMARmodel}. Section
\ref{Application} concludes the article with a numerical study of the
previous results.

%%%%% Section: Efficiency %%%%%

\section{Efficiency}
\label{Efficiency}
In this section we will construct the efficient influence function for
estimating a linear functional $E[h(e)]$ based on observations of the
form $(X,\delta Y, \delta)$, and later specialize this result to $F(t)
= E[\1[e \leq t]]$, $t \in \R$. We will first follow the arguments of
Chown and M\"uller (2013), who study this problem for the special case
of a constant variance function. In addition, we follow the arguments
of M\"uller, Schick and Wefelmeyer (2006), who consider linear
functionals of the joint distribution of $X$ and $Y$ with data of the
above form. Finally, we use insight from the arguments of Schick
(1994), who study estimation of functionals from various
heteroskedastic regression models. We only summarize their main
arguments and refer the reader to these papers for further
details. This allows us to adapt parts of those proofs to the model
considered here. Consequently, we only sketch the proofs of the
results in this section. To continue, we require Assumption
\ref{AssumpF} to hold.

In the following, no assumption of a parametric model (finite
dimensional) is imposed on any of the regression function, the scale
function or the joint distribution of the observations. This means the
parameter set $\Theta$ consists of the unknown functions of the
statistical model: a family of covariate distributions $\GG$
satisfying Assumption \ref{AssumpG}, a family of error distributions
$\FF$ that have mean zero, unit variance, finite fourth moment and
satisfy Assumption \ref{AssumpF}, a space of regression functions
$\RR$ that belong to $H(d,\varphi)$, a space of scale functions $\SS$
that is a subspace of $\RR$ composed of positive--valued functions and
a family of response probability distributions $\BB$ that are
characterized by the functions from $[0,\,1]^m$ to $(0,1]$. More
precisely, $\Theta = \GG \times \FF \times \RR \times \SS \times
\BB$.

We now proceed as in Section 2 of Chown and M\"uller (2013). Since the
construction of the efficient influence function utilizes directional
information in $\Theta$, we now identify the set of perturbations
$\dot \Theta$, which may be thought of as directions. Observe the
joint distribution $P(dx,dy,dz)$ takes the form
\begin{equation*}
P(dx,dy,dz) = G(dx)B_{\pi(x)}(dz)\Big\{zQ(dy|x) + (1-z)\delta_0(dy)\Big\},
\end{equation*}
where $B_p = p\delta_1 + (1-p)\delta_0$ denotes the Bernoulli
distribution with parameter $p$ and $\delta_t$ as the Dirac measure at
$t$. The model considered here deviates from that considered in Chown
and M\"uller (2013) only in the conditional distribution $Q$ of $Y$
given $X$. This means we first need to consider the spaces
$\mathcal{L}_{2,0}(G)$, $\mathcal{L}_2(G_{\pi})$ and
$\mathcal{V}_0$. Here $\mathcal{L}_{2,0}(G)$ is the space of functions
that are square integrable and have mean zero with respect to $G$,
$\mathcal{L}_2(G_{\pi})$ is a subspace of $\mathcal{L}_2(G)$, where the
functions $w$ now satisfy $E[w^2(X)\pi(X)\{1 - \pi(X)\}] < \infty$, and
$\mathcal{V}_0$ is the space of functions satisfying $\int
v(x,y)Q(dy|x) = 0$. It then follows for perturbations $G_{nu}$ of $G$,
$\pi_{nw}$ of $\pi$ and $Q_{nv}$ of $Q$ that are Hellinger
differentiable requires the functions $u$, $w$ and $v$ to be further
restricted to appropriate subspaces. Since we have only assumed a
model for $Q$, which follows from the heteroskedastic nonparametric
regression model \eqref{modeleq}, this only requires resolving the
appropriate subspace $\mathcal{V}$ of $\mathcal{V}_0$.

Using the independence of the covariates $X$ and errors $e$, we may
write
\begin{equation*}
\frac{d}{dy}Q(y|x) = 
 f\bigg(\frac{y - r(x)}{\sigma(x)}\bigg)\frac{1}{\sigma(x)}.
\end{equation*}
Hence, in order to derive the explicit form of $\mathcal{V}$, we
introduce further perturbations $s$, $t$ and $m$ of the unknown
functions $f$, $r$ and $\sigma$, respectively, and write
\begin{equation*}
\frac{d}{dy}Q_{nv}(y|x) = \frac{d}{dy}Q_{nstm}(y|x)
 = f_{ns}\bigg(\frac{y - r_{nt}(x)}{\sigma_{nm}(x)}\bigg)
 \frac1{\sigma_{nm}(x)},
\end{equation*}
where $f_{ns}(z) = f(z)\{1 + n^{-1/2}s(z)\}$, $r_{nt}(x) = r(x) +
n^{-1/2}t(x)$ and $\sigma_{nm}(x) = \sigma(x) + n^{-1/2}m(x)$ for $s
\in \mathcal{S}$, $t \in \mathcal{L}_2(G_1)$ and $m \in
\mathcal{L}_2(G_1)$. Here
\begin{equation*} 
\mathcal{S} = \bigg\{ s \in \mathcal{L}_2(F) \,:\,
 \int_{-\infty}^{\infty} s(z)f(z)\,dz = 0,
 \int_{-\infty}^{\infty} zs(z)f(z)\,dz = 0
 \text{ and }
 \int_{-\infty}^{\infty} z^2s(z)f(z)\,dz = 0 \bigg\},
\end{equation*}
which is derived by the constraints that $f_{ns}$ must integrate to
one, have mean zero and have unit variance. In the following we will
write ``$\doteq$'' to denote asymptotic equivalence; i.e.\ equality up
to an additive term of order $\opn$. In addition, we introduce the
notation $\ellvec(z) = (\ell_1(z),\ell_2(z))^T$, for $\ell_1(z) =
-f'(z)/f(z)$ and $\ell_2(z) = -1 - zf'(z)/f(z)$, $\kvec(x) =
(t(x)/\sigma(x),m(x)/\sigma(x))^T$, $\evec_1 = (1,0)^T$ and $\evec_2 =
(0,1)^T$. Similar to the calculations of Chown and M\"uller (2013) and
Schick (1994), who considers, more generally, directionally
differentiable regression and scale functions, we have, by a brief
sketch,
\begin{align*} 
&f_{ns}\bigg(\frac{y - r_{nt}(x)}{\sigma_{nm}(x)}\bigg)
 \frac{1}{\sigma_{nm}(x)} \\
&\quad\doteq f\bigg(\frac{y - r(x)}{\sigma(x)}\bigg)
 \frac{1}{\sigma(x)} \times \bigg\{1 + n^{-1/2}\bigg[
 \kvec^T(x)\ellvec\bigg(\frac{y - r(x)}{\sigma(x)}\bigg)
 + s\bigg(\frac{y - r(x)}{\sigma(x)}\bigg)\bigg]\bigg\}.
\end{align*} 
Hence, 
\begin{equation*}
\frac{d}{dy}Q_{ns\kvec}(y|x) \doteq
 f\bigg(\frac{y - r(x)}{\sigma(x)}\bigg)\frac{1}{\sigma(x)}
 \bigg\{1 + n^{-1/2}\bigg[
 \kvec^T(x)\ellvec\bigg(\frac{y - r(x)}{\sigma(x)}\bigg)
 + s\bigg(\frac{y - r(x)}{\sigma(x)}\bigg)\bigg]\bigg\} 
\end{equation*}
and $\mathcal{V}$ takes the form
\begin{equation*}
\mathcal{V} = \bigg\{
v(x,y) = \kvec^T(x)\ellvec\bigg(\frac{y - r(x)}{\sigma(x)}\bigg)
 + s\bigg(\frac{y - r(x)}{\sigma(x)}\bigg):
 \kvec \in \mathcal{L}_2(G_1)\times\mathcal{L}_2(G_1) \text{ and } 
 s \in \mathcal{S}\bigg\}.
\end{equation*}

Thus we have perturbations $\dot \Theta = \mathcal{L}_{2,0}(G) \times
\mathcal{S} \times \{\mathcal{L}_2(G_1) \times \mathcal{L}_2(G_1)\}
\times \mathcal{L}_2(G_\pi)$. Observe, for any $\gamma =
(u,s,\kvec,w)$ in $\dot \Theta$, the perturbed distribution
$P_{n\gamma}(dx,dy,dz)$ of an observation $(X,\delta Y,\delta)$ is
then
\begin{equation*} 
P_{n\gamma}(dx,dy,dz) = G_{nu}(dx)B_{\pi_{nw}(x)}(dz)
 \Big\{zQ_{ns\kvec}(dy|x) + (1 - z)\delta_0(dy)\Big\}.
\end{equation*}
It follows that $P$ is Hellinger differentiable with tangent
\begin{equation*} 
d_\gamma\big(X,\delta Y, \delta\big) = u(X) 
 + \big\{\delta - \pi(X)\big\}w(X)
 + \delta\big\{\kvec^T(X)\ellvec(e) + s(e)\big\}, 
\end{equation*}
and we arrive at the form of the tangent space as
\begin{equation*}
T = \mathcal{L}_{2,0}(G)
 \oplus \Big\{\big\{\delta - \pi(X)\big\}w(X) 
 \,:\, w \in \mathcal{L}_2(G_\pi)\Big\}
 \oplus \Big\{\delta v(X,Y): v \in \mathcal{V}\Big\}.
\end{equation*}
Consequently, we have local asymptotic normality. This means the
following expansion holds:
\begin{equation*}
\sj \log\bigg(\frac{dP_{n\gamma}}{dP}
 \big(X_j,\delta_j Y_j, \delta_j\big)\bigg)
 = \savj d_\gamma\big(X_j,\delta_j Y_j,\delta_j\big) 
 - \frac12 E\Big[d_\gamma^2\big(X,\delta Y,\delta\big)\Big] + \op,
\end{equation*}
where $dP$ denotes the density function of $P$.

We are interested in the linear functional $E[h(e)]$. In order to
specify a gradient for $E[h(e)]$, we first need to find its
directional derivative $\gamma_h \in \dot \Theta$, which is
characterized by a limit as follows. As in M\"uller, Schick and
Wefelmeyer (2004), we have, for every $s \in S$,
\begin{equation*}
\lim_{n \to \infty} n^{1/2}\bigg[
 \int_{-\infty}^{\infty} h(z)f_{ns}(z)\,dz
 - \int_{-\infty}^{\infty} h(z)f(z)\,dz \bigg]
 = E[h(e)s(e)] = E\big[h_0(e)s(e)\big],
\end{equation*}
with $h_0$ given as a projection of $h$ onto $\mathcal{S}$:
\begin{align*} 
h_0(z) &= h(z) - E[h(e)] - zE[e h(e)] \\
&\quad - \frac{z^2 - E[e^3]z - 1}{E[e^4] - E^2[e^3] - 1}
 \Big\{E\big[e^2 h(e)\big] - E\big[e^3\big]E[e h(e)] 
 - E[h(e)]\Big\}.
\end{align*}
Thus, $E[h(e)]$ is directionally differentiable with directional
derivative $(0,h_0,\mathbf{0},0)$ and gradient $h_0(e)$. By the
convolution theorem (see, for example, Section 2 of Schick, 1993) the
unique canonical gradient $g^*(X,\delta Y, \delta)$ is found by
orthogonally projecting the gradient $h_0(e)$ onto the tangent space
$T$. Thus, $g^*(X,\delta Y, \delta)$ must take the form
\begin{align} \label{cangrad}
g^*\big(X,\delta Y, \delta\big) = u^*(X) 
 + \big\{\delta - \pi(X)\big\}w^*(X)
 + \delta\big\{\kvec^{*T}(X)\ellvec(e) + s^*(e)\big\}.
\end{align}
Now proceeding as in Section 2 of Chown and M\"uller (2013), we obtain
the following result:
\begin{lemma} \label{lemcangrad}
The canonical gradient of $E[h(e)]$ is $g^*(X,\delta Y,\delta)$, which
is characterized by $(0,s^*,\kvec^*,0)$, where
\begin{equation*}
s^*(z) = \frac{1}{E\delta}h_0(z) - E_1\big[\kvec^{*T}(X)\big]\ellvec_0(z)
\quad \text{and} \quad 
\kvec^* \equiv -\frac{1}{E\delta}J_d^{-1}E\big[\ellvec_0(e)h_0(e)\big], 
\end{equation*}
with $h_0$ given above and the quantities
\begin{equation*}
\ellvec_0(z) = \ellvec(z) - z\evec_1
 - \frac{z^2 - E[e^3]z - 1}{E[e^4] - E^2[e^3] - 1}
 \Big\{2\evec_2 - E\big[e^3\big]\evec_1\Big\}
\end{equation*}
and
\begin{equation*}
J_d^{-1} = \frac{1}{E[e^4] - E^2[e^3] - 1}
 \begin{bmatrix}
 E[e^4] - 1 & -2E[e^3] \\
 -2E[e^3] & 4
 \end{bmatrix}.
\end{equation*}
\end{lemma}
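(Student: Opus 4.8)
The plan is to follow Section~2 of Chown and M\"uller (2013), with the scale perturbation added. By the convolution theorem, the canonical gradient $g^*$ is the orthogonal projection onto the tangent space $T$ of any gradient of $E[h(e)]$; equivalently, it is the unique element of $T$ for which $E\big[g^*(X,\delta Y,\delta)\,d_\gamma(X,\delta Y,\delta)\big]$ equals the directional derivative of $E[h(e)]$ along $\gamma$, for every $\gamma=(u,s,\kvec,w)\in\dot\Theta$. The three summands of $T$ are mutually orthogonal in $\mathcal{L}_2(P)$: functions of $X$ are orthogonal to $\{\delta-\pi(X)\}w(X)$ because $E[\delta-\pi(X)\mid X]=0$, and to $\delta v(X,Y)$ because $v\in\mathcal{V}\subseteq\mathcal{V}_0$; the remaining cross term vanishes since $\delta\{\delta-\pi(X)\}=\delta\{1-\pi(X)\}$. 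Writing $g^*=u^*(X)+\{\delta-\pi(X)\}w^*(X)+\delta v^*(X,Y)$ with $v^*\in\mathcal{V}$, I would determine the three blocks in turn.

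First I would show $u^*\equiv0$ and $w^*\equiv0$. Because $E[h(e)]$ is constant along the $u$-, $\kvec$- and $w$-directions, pairing $g^*$ against $u(X)$ for $u\in\mathcal{L}_{2,0}(G)$ must give $0$; since $\{\delta-\pi(X)\}w^*(X)$ and $\delta v^*(X,Y)$ are both $\mathcal{L}_2(P)$-orthogonal to functions of $X$ (the former by $E[\delta-\pi(X)\mid X]=0$, the latter by $E[\delta v^*(X,Y)\mid X]=\pi(X)\int v^*(X,y)\,Q(dy\mid X)=0$), this reduces to $\langle u^*,u\rangle_G=0$ for all such $u$, hence $u^*=0$. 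Pairing $g^*$ against $\{\delta-\pi(X)\}w(X)$ for $w\in\mathcal{L}_2(G_\pi)$, and using $\operatorname{Var}(\delta\mid X)=\pi(X)\{1-\pi(X)\}$, gives $E\big[w^*(X)w(X)\pi(X)\{1-\pi(X)\}\big]=0$ for all such $w$, hence $w^*=0$. Thus $g^*(X,\delta Y,\delta)=\delta v^*(X,Y)$, and I would write
\[
v^*(x,y)=\kvec^{*T}(x)\,\ellvec\!\bigl(\tfrac{y-r(x)}{\sigma(x)}\bigr)+s^*\!\bigl(\tfrac{y-r(x)}{\sigma(x)}\bigr),\qquad\kvec^*\in\mathcal{L}_2(G_1)^2,\ \ s^*\in\mathcal{S}.
\]

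The two remaining relations are $E[\delta v^*(X,Y)\,s(e)]=E[h_0(e)s(e)]$ for all $s\in\mathcal{S}$ and $E[\delta v^*(X,Y)\,\kvec^T(X)\ellvec(e)]=0$ for all $\kvec\in\mathcal{L}_2(G_1)^2$. I would evaluate the left sides by conditioning on $X$, using $\delta\perp(Y,e)\mid X$ (MAR), the independence of $e$ and $X$, and the reweighting $E[\pi(X)\psi(X)]=E\delta\,E_1[\psi(X)]$ from $G_1(dx)=\{\pi(x)/E\delta\}G(dx)$; and whenever $\ellvec(e)$ is paired against a member of $\mathcal{S}$ (that is, against $s$, $s^*$ or $h_0$) I would replace it by its $\mathcal{L}_2(F)$-projection $\ellvec_0$ onto $\mathcal{S}$, obtained by subtracting the $\mathcal{L}_2(F)$-projection of $\ellvec$ onto $\operatorname{span}\{1,e,e^2\}=\mathcal{S}^{\perp}$. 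That is a finite-dimensional least-squares computation whose inner products involve only the moments of $e$ up to order four; it returns the displayed $\ellvec_0$, its Gram determinant equals $E[e^4]-E^2[e^3]-1=\min_{a\in\R}\operatorname{Var}(e^2-ae)$, which is strictly positive since $e$ has a density, and this positivity is precisely what makes the associated $2\times2$ matrix invertible with inverse $J_d^{-1}$. The first relation then forces $E_1[\kvec^{*T}(X)]\ellvec_0+s^*-\tfrac1{E\delta}h_0$ to be orthogonal to $\mathcal{S}$; as it also lies in $\mathcal{S}$, it vanishes, giving $s^*(z)=\tfrac1{E\delta}h_0(z)-E_1[\kvec^{*T}(X)]\ellvec_0(z)$. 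In the second relation, after conditioning, the equation takes the form $\int\bigl(\kvec^{*T}(x)E[\ellvec(e)\ellvec^T(e)]+(\text{a constant row vector})\bigr)\kvec(x)\,G_1(dx)=0$ for every $\kvec$; since $E[\ellvec(e)\ellvec^T(e)]$ is invertible under Assumption~\ref{AssumpF}, this forces $\kvec^*$ to be constant in $x$ and to solve a $2\times2$ linear system, and substituting the expression just obtained for $s^*$ and solving yields $\kvec^*\equiv-\tfrac1{E\delta}J_d^{-1}E[\ellvec_0(e)h_0(e)]$. Finally I would check $v^*\in\mathcal{V}$: $s^*\in\mathcal{S}$ is inherited from $h_0,\ellvec_0\in\mathcal{S}$ via the three moment constraints, and $\kvec^*\in\mathcal{L}_2(G_1)^2$ is immediate since $\kvec^*$ is constant and $X$ has a bounded density under Assumption~\ref{AssumpG}.

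The hard part will be that $\mathcal{V}$ is \emph{not} the orthogonal direct sum of its ``$s$-part'' and its ``$\kvec$-part'': $\ellvec$ is not orthogonal to $\mathcal{S}$ in $\mathcal{L}_2(F)$, so the two defining relations above are genuinely coupled. Disentangling them --- introducing the projected score $\ellvec_0$ and the information matrix $J_d$, then verifying that the candidate pair $(s^*,\kvec^*)$ simultaneously satisfies every relation (for each $s\in\mathcal{S}$ and each $\kvec$) --- is the delicate step; the rest is the conditioning-and-reweighting bookkeeping of Section~2 of Chown and M\"uller (2013), the only structural addition being the scale perturbation $m$, which contributes the second coordinate $\ell_2$ of $\ellvec$ and upgrades their scalar computations to the present $2\times2$ system.
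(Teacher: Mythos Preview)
Your proposal is correct and follows essentially the same route as the paper, which only sketches the argument by saying ``proceeding as in Section~2 of Chown and M\"uller (2013)'' after identifying the gradient $h_0(e)$ and the form \eqref{cangrad}. Your write-up in fact supplies the details the paper omits: the block-orthogonality of $T$, the vanishing of $u^*$ and $w^*$, the passage from $\ellvec$ to $\ellvec_0$ when paired with elements of $\mathcal{S}$, and the reduction of the coupled $(s,\kvec)$-relations to a $2\times2$ linear system with matrix $J_d=E[\ellvec_d\ellvec_d^T]$. One small expository point: your justification that the cross term between $\{\delta-\pi(X)\}w(X)$ and $\delta v(X,Y)$ vanishes needs, in addition to the identity $\delta\{\delta-\pi(X)\}=\delta\{1-\pi(X)\}$, the fact that $E[v(X,Y)\mid X]=0$ (i.e.\ $v\in\mathcal{V}_0$), which you do use elsewhere; make that explicit there as well.
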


We will call an estimator $\hat \mu$ for $E[h(e)]$ efficient, in the
sense of H\'ajek and Le Cam, if it is asymptotically linear with
corresponding influence function equal to the canonical gradient
$g^*(X,\delta Y,\delta)$ that characterizes $E[h(e)]$. This means
$\hat \mu$ satisfies the expansion
\begin{equation*}
n^{1/2}\big\{\hat{\mu} - E[h(e)]\big\} 
 = \savj g^*\big(X_j,\delta_j Y_j,\delta_j\big) + o_p(1). 
\end{equation*}
We combine this fact with Lemma \ref{lemcangrad} and \eqref{cangrad}
to obtain the following result:
\begin{theorem} \label{thmheff}
Consider the heteroskedastic nonparametric regression model with
responses missing at random. An estimator $\hat \mu$ of $E[h(e)]$ is
efficient, if it satisfies the expansion
\begin{equation*}
n^{1/2}\big\{\hat{\mu} - E[h(e)]\big\} = \savj \frac{\delta}{E\delta}
 \Big[h_0(e_j) - E^T\big[h_0(e_j)\ellvec_0(e_j)\big]J_d^{-1}
 \ellvec_d(e_j)\Big] + o_p(1), 
\end{equation*}
where $h_0$ is given above, $\ellvec_0$ and $J_d^{-1}$ are given in
Lemma \ref{lemcangrad} and
\begin{equation*}
\ellvec_d(z) = z\evec_1 + \frac{z^2 - zE[e^3] - 1}{E[e^4] - E^2[e^3] - 1}
 \Big\{2\evec_2 - E\big[e^3\big]\evec_1\Big\}. 
\end{equation*}
\end{theorem}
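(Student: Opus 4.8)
\emph{Proof proposal.} The plan is to show that the canonical gradient $g^*(X,\delta Y,\delta)$ identified in Lemma~\ref{lemcangrad}, when written out in the parametrization \eqref{cangrad}, is exactly the influence function displayed in the theorem. Once this is established the claim is immediate from the definition of efficiency stated just above the theorem, which requires an efficient $\hat\mu$ to be asymptotically linear with influence function equal to the canonical gradient. So the whole argument reduces to a substitution-and-simplification exercise with no analytic content.

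First I would insert the characterization $(0,s^*,\kvec^*,0)$ of Lemma~\ref{lemcangrad} into \eqref{cangrad}. The $\mathcal{L}_{2,0}(G)$ component $u^*$ and the component $\{\delta-\pi(X)\}w^*(X)$ both vanish, so $g^*(X,\delta Y,\delta)=\delta\{\kvec^{*T}(X)\ellvec(e)+s^*(e)\}$. Next I would substitute $s^*(z)=\frac{1}{E\delta}h_0(z)-E_1[\kvec^{*T}(X)]\ellvec_0(z)$, using that $\kvec^*$ is a constant vector so that $E_1[\kvec^{*T}(X)]=\kvec^{*T}$, and collect terms to get
\begin{equation*}
g^*(X,\delta Y,\delta)=\delta\Big\{\tfrac{1}{E\delta}h_0(e)+\kvec^{*T}\big[\ellvec(e)-\ellvec_0(e)\big]\Big\}.
\end{equation*}
The one identity to verify with care is that $\ellvec(z)-\ellvec_0(z)=\ellvec_d(z)$; this is immediate on subtracting the displayed formula for $\ellvec_0$ in Lemma~\ref{lemcangrad} from $\ellvec$ and comparing term by term with the definition of $\ellvec_d$ in the theorem statement. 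Finally I would substitute $\kvec^{*T}=-\frac{1}{E\delta}E^T[\ellvec_0(e)h_0(e)]J_d^{-1}$, which uses the symmetry of $J_d^{-1}$ when transposing the formula for $\kvec^*$, and arrive at
\begin{equation*}
g^*(X,\delta Y,\delta)=\frac{\delta}{E\delta}\Big[h_0(e)-E^T\big[h_0(e)\ellvec_0(e)\big]J_d^{-1}\ellvec_d(e)\Big],
\end{equation*}
which is exactly the influence function in the theorem.

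I do not expect any genuine obstacle here: the result is essentially a corollary of Lemma~\ref{lemcangrad} and \eqref{cangrad}. The only places demanding attention are bookkeeping ones --- confirming that the $u^*$ and $w^*$ components are annihilated (because $h_0(e)$ depends on the data only through $e$, has conditional mean zero given $X$, and is uncorrelated with $\delta-\pi(X)$ under the MAR assumption, as already recorded in the derivation preceding Lemma~\ref{lemcangrad}), remembering that $\kvec^*$ does not depend on $X$, verifying the projection identity $\ellvec-\ellvec_0=\ellvec_d$, and using the symmetry of $J_d^{-1}$. The efficiency conclusion then follows from the convolution theorem as invoked earlier in this section.
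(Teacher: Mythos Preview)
Your proposal is correct and matches the paper's approach exactly: the paper does not give a separate proof but simply states that the result follows by combining the definition of efficiency (asymptotic linearity with influence function equal to the canonical gradient) with Lemma~\ref{lemcangrad} and \eqref{cangrad}. Your write-up merely makes explicit the substitution $u^*=0$, $w^*=0$, $s^*$, $\kvec^*$ into \eqref{cangrad}, the identity $\ellvec-\ellvec_0=\ellvec_d$, and the symmetry of $J_d^{-1}$, which is precisely the bookkeeping the paper leaves to the reader.
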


In this article, we are interested in the function $h(z) = \1[z \leq
t]$ because we estimate $F(t) = E[\1[e \leq t]]$ using
$\hedf_c$. We now obtain, using Theorem \ref{thmheff} with this $h$,
the expansion for an efficient estimator of the error distribution
function $F$.
\begin{corollary} \label{corFeff}
Consider the heteroskedastic nonparametric regression model with
responses missing at random. An estimator $\Fhat$ of $F$ is efficient,
in the sense of H\'ajek and Le Cam, if it satisfies the expansion
\begin{equation*}
n^{1/2}\big\{\Fhat(t) - F(t)\big\} 
 = \savj \frac{\delta}{E\delta} \bigg[\1\big[e_j \leq t\big] - F(t)
 + f(t)\bigg\{e_j + \frac{t}{2}\big(e_j^2 - 1\big)\bigg\}\bigg] 
 + o_p(1). 
\end{equation*}
\end{corollary}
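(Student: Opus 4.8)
The plan is to read off Corollary~\ref{corFeff} from Theorem~\ref{thmheff} with $h(z)=\1[z\le t]$, so that $E[h(e)]=F(t)$: since that theorem characterizes efficient estimators of $E[h(e)]$ through their influence function, it suffices to show that $\phi(\delta,e,t)=(E\delta)^{-1}\delta\,\psi(e)$, with $\psi(z)=\1[z\le t]-F(t)+f(t)\bigl\{z+\tfrac{t}{2}(z^2-1)\bigr\}$, is the canonical gradient of $F(t)=E[\1[e\le t]]$; the convolution theorem then yields the asserted efficiency. Identifying $\phi$ as the canonical gradient amounts to two checks: that $\delta\psi(e)$ lies in the tangent space $T$, and that $\phi$ is a gradient of $F(t)$, i.e.\ $E[\phi\,d_\gamma]$ equals the directional derivative $E[h_0(e)s(e)]$ for every $\gamma=(u,s,\kvec,w)\in\dot\Theta$, with $h_0$ the $\mathcal{S}$--projection of $\1[\cdot\le t]$.

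First I would write out $h_0$ for this $h$: its relevant moments are $E[h(e)]=F(t)$, $E[eh(e)]=\int_{-\infty}^{t}zf(z)\,dz$ and $E[e^2h(e)]=\int_{-\infty}^{t}z^2f(z)\,dz$, so $h_0(e)-\1[e\le t]$ is a polynomial of degree at most two in $e$, whence $\psi(e)-h_0(e)$ is a \emph{mean--zero} polynomial of degree at most two. This structural fact discharges both checks. For the first, a short computation shows that the two components of $\ellvec_d$ span the (two--dimensional) space of mean--zero degree--two polynomials in $e$, so $\psi(e)-h_0(e)=\kvec_1^T\ellvec_d(e)$ for a constant $\kvec_1$; since $\ellvec_d=\ellvec-\ellvec_0$ this gives $\psi(e)=\bigl(h_0(e)-\kvec_1^T\ellvec_0(e)\bigr)+\kvec_1^T\ellvec(e)$ with $h_0-\kvec_1^T\ellvec_0\in\mathcal{S}$ (the components of $\ellvec_0$ also lying in $\mathcal{S}$, by a direct verification), whence $\psi\in\mathcal{V}$ and $\delta\psi(e)$ lies in the third summand of $T$. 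For the second, $\psi-h_0\perp\mathcal{S}$ gives $E[\psi(e)s(e)]=E[h_0(e)s(e)]$ for every $s\in\mathcal{S}$; expanding $E[\phi\,d_\gamma]$ and using the MAR conditional independence of $\delta$ and $e$ given $X$ together with the independence of $e$ and $X$, the $u$-- and $(\delta-\pi)w$--contributions vanish because $E[\psi(e)]=0$, the $s$--contribution equals $E[\psi(e)s(e)]$, and the $\kvec$--contribution reduces to $E[\psi(e)\ellvec(e)]$ paired with a vector that ranges over all of $\R^2$ as $\kvec$ varies, so the identity for every $\gamma$ forces $E[\psi(e)\ellvec(e)]=0$.

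That last identity is the one genuine computation, carried out by integration by parts. With $\ell_1=-f'/f$ and $\ell_2=-1-zf'/f$, the classical score moment identities $E[\ell_1(e)]=E[\ell_2(e)]=0$, $E[e\ell_1(e)]=1$, $E[e\ell_2(e)]=0$, $E[e^2\ell_1(e)]=0$, $E[e^2\ell_2(e)]=2$ (valid under Assumption~\ref{AssumpF} together with the fourth moment condition built into the error model) and the two truncated identities $E[\1[e\le t]\ell_1(e)]=-f(t)$, $E[\1[e\le t]\ell_2(e)]=-tf(t)$ yield $E[\psi(e)\ell_1(e)]=-f(t)+f(t)=0$ and $E[\psi(e)\ell_2(e)]=-tf(t)+tf(t)=0$; the boundary terms $f(z)\to 0$ and $zf(z)\to 0$ as $z\to\pm\infty$ used in the latter integrations by parts follow from Assumption~\ref{AssumpF} via Cauchy--Schwarz and the finiteness of $E[e^2]$. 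This calculation is precisely what pins down the location coefficient $1$ and the scale coefficient $t/2$ in $\psi$. Alternatively one can proceed literally from the theorem: both $h_0(e)-E^T[h_0(e)\ellvec_0(e)]J_d^{-1}\ellvec_d(e)$ and $\psi(e)$ differ from $\1[e\le t]$ by a polynomial of degree at most two in $e$, and since $1,e,e^2$ are linearly independent in $\mathcal{L}_2(F)$ it suffices to match the expectations of the two sides against $1$, $e$ and $e^2$; using $E[e]=0$, $E[e^2]=1$, $E[e^kh_0(e)]=0$ for $k=0,1,2$, $E[e\,\ellvec_d(e)]=\evec_1$ and $E[e^2\ellvec_d(e)]=2\evec_2$, the three scalar equations reduce to algebra in which the truncated moments $\int_{-\infty}^{t}zf$, $\int_{-\infty}^{t}z^2f$ and the error moments $E[e^3]$, $E[e^4]$ all cancel.

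The main obstacle is organizational rather than conceptual: the generic formula of Theorem~\ref{thmheff} transports the third and fourth error moments and the two truncated moments, and one has to see that for the indicator functional these disappear --- which is forced by the orthogonality of the $\mathcal{S}$--projections. The single place that needs genuine (if routine) care is the pair of score computations $E[\psi(e)\ellvec(e)]=0$, since they fix the exact form of $\psi$ and are where the integration--by--parts boundary behaviour of $f$ must be drawn from Assumption~\ref{AssumpF}.
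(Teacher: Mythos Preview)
Your proposal is correct and, in fact, considerably more detailed than what the paper itself provides. The paper does not give a separate proof of Corollary~\ref{corFeff}; it simply states that the result follows from Theorem~\ref{thmheff} with $h(z)=\1[z\le t]$ and moves on. So there is no ``paper's own proof'' to compare against beyond that one-line remark.

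Your write-up supplies exactly the computation the paper leaves implicit, and you offer two routes. The second route --- literally taking the expression $h_0(e)-E^T[h_0(e)\ellvec_0(e)]J_d^{-1}\ellvec_d(e)$ and matching it against $\psi$ by equating coefficients of $1,e,e^2$ --- is the direct instantiation of Theorem~\ref{thmheff} that the paper has in mind. Your first route --- verifying from scratch that $\delta\psi(e)$ lies in $T$ and that $E[\phi\,d_\gamma]=E[h_0(e)s(e)]$ for all $\gamma$ --- bypasses the general formula of Theorem~\ref{thmheff} and checks the canonical-gradient property directly; this is cleaner because the score identities $E[\psi(e)\ell_1(e)]=0$ and $E[\psi(e)\ell_2(e)]=0$ fall out immediately from the two integrations by parts, without ever having to introduce $J_d^{-1}$ or the truncated moments. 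Either route is valid; the first has the advantage that it makes transparent \emph{why} the coefficients $1$ and $t/2$ appear and why all the third- and fourth-moment debris cancels, whereas the second is what a reader following the paper's hint would actually do.

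One small point of presentation: in your gradient check you phrase the $\kvec$-contribution as ``the identity for every $\gamma$ forces $E[\psi(e)\ellvec(e)]=0$,'' which reads as deriving a necessary condition rather than verifying a sufficient one. Since you then establish $E[\psi(e)\ellvec(e)]=0$ in the next paragraph, the logic is fine, but it would read more naturally to compute $E[\psi(e)\ellvec(e)]=0$ first and then conclude that the $\kvec$-contribution vanishes.
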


%%%%%%%%%%

\section{Simulations} \label{Application}
We conclude this article with a small numerical study of the previous
results. In the following we work with
\begin{equation*}
r\big(x_1,x_2\big) = 1 + x_1 - x_2 + 2e^{-\frac12\sqrt{x_1^2 + x_2^2}}
\quad\text{and}\quad
\sigma\big(x_1,x_2\big) = \sqrt{1 + 2x_1^2 + 2x_2^2}
\end{equation*}
to preserve the nonparametric nature of the study. The covariates
$X_1$ and $X_2$ are each randomly generated from a $U(-1,1)$
distribution, and the errors $e$ are generated from a standard normal
distribution. The indicators $\delta$ are randomly generated from a
Bernoulli$(\pi(X_1,X_2))$ distribution, with $\pi(X_1,X_2) =
P(\delta=1|X_1,X_2)$. Here we use $\pi(x_1,x_2) = 1 - 1/(1 +
e^{-(x_1 + x_2)/2})$. Consequently, the average amount of missing
data is about 50\% (ranging between 26\% and 74\%). We work with
$d=3$, the locally cubic smoother, to estimate both of the functions
$r$ and $\sigma$. For our choice of using a product of tricubic kernel
functions and bandwidth $\lambda_n = 3(n\log(n))^{-1/7}$, the
assumptions of Theorem \ref{ThmFhatMARmodel} are satisfied.

\begin{table}
\begin{minipage}{\textwidth}
\centering
\begin{tabular}{|r|c c c c|}
\hline
\diagbox{$n$}{$t$} & $-3$ & $-2$ & $-1$ & $0$ \\
\hline
100 & 0.0977 (0.0231) & 0.0563 (0.0588) & 0.0030 (0.1516) & -0.0318
 (0.1957) \\
200 & 0.0965 (0.0241) & 0.0777 (0.0722) & 0.1646 (0.1555) & -0.0818
 (0.2124) \\
500 & 0.0301 (0.0089) & 0.0008 (0.0496) & 0.1806 (0.1285) & -0.0746
 (0.2022) \\
1000 & 0.0006 (0.0030) & -0.0382 (0.0348) & 0.1389 (0.1033) & -0.0826
 (0.1848) \\
\hline
\end{tabular}
\vspace{1ex}
\caption{{\em Simulated asymptotic bias and variance (in parentheses),
    at the points $-3$, $-2$, $-1$ and $0$, of $n^{-1/2}\{\hedf_c -
    F\}$.}}
\label{tableBiasVar}
\end{minipage}
\begin{minipage}{\textwidth}
\centering
\vspace*{2em}
\begin{tabular}{|r|c c c c|c|}
\hline
\diagbox{$n$}{$t$} & $-3$ & $-2$ & $-1$ & $0$ & AMISE \\
\hline
100 & 0.0326 & 0.0619 & 0.1516 & 0.1967 & 0.8248 \\
200 & 0.0334 & 0.0782 & 0.1826 & 0.2191 & 0.9248 \\
500 & 0.0098 & 0.0496 & 0.1611 & 0.2077 & 0.7184 \\
1000 & 0.0030 & 0.0362 & 0.1226 & 0.1916 & 0.5812 \\
$\infty$ & 0.0025 & 0.0270 & 0.0913 & 0.1817 & 0.4231 \\
\hline
\end{tabular}
\vspace{1ex}
\caption{Simulated asymptotic mean squared error, at the points $-3$,
  $-2$, $-1$ and $0$, and asymptotic mean integrated squared error of
  $n^{-1/2}\{\hedf_c - F\}$.}
\label{tableMSEMISE}
\end{minipage}
\end{table}

To check the performance of our proposed estimator, we have conducted
simulations of 1000 runs using samples of sizes 100, 200, 500 and
1000. The distribution function has been estimated at the points $0$,
$-1$, $-2$ and $-3$ (the results for t--values $1$, $2$, and $3$ are
very similar). Table \ref{tableBiasVar} shows the results of the
simulated asymptotic bias and variance of $\hedf_c$, which is
calculated by multiplying the simulated bias by the square-root of
each sample size and multiplying the simulated variance by each sample
size. These quantities are predicted to be stable across sample sizes
by Theorem \ref{ThmFhatMARmodel}, and, therefore, will change only
with the value of $t$. Table \ref{tableMSEMISE} shows the results of
the simulated asymptotic mean squared error (AMSE) and the simulated
asymptotic mean integrated squared error (AMISE), which are calculated
similarly to the simulated asymptotic variance. In addition, we have
calculated the AMSE and AMISE for an infinitely large sample using the
results of Theorem \ref{ThmFhatMARmodel}, which are given by the
figures labeled with sample size $\infty$.

Beginning with Table \ref{tableBiasVar}, we can see the asymptotic
bias in $\hedf_c$ is slightly negative near zero, increases to become
positive when moving away from zero and, finally, decreases toward
zero again when moving into the tails of the distribution. This is in
contrast to the asymptotic variance, which appears to be largest near
zero and only decreases toward zero when moving into the tails of the
distribution. Nevertheless, we can see the values appear reasonably
stable at the larger sample sizes 500 and 1000 as desired. Turning our
attention now to Table \ref{tableMSEMISE}, we can plainly see the
estimator $\hedf_c$ appears to have both AMSE and AMISE values
decreasing toward the respective predicted limiting values (given by
the $\infty$ figures). This indicates the predictions made by Theorem
\ref{ThmFhatMARmodel} are indeed adequate for describing the limiting
behavior of $\hedf_c$. In conclusion we find the complete case
estimator $\hedf_c$ useful and practical for estimating the
distribution of the errors $F$ in the heteroskedastic MAR model.

%%%%%%%%%%

\section{Appendix} \label{Appendix}
This section is the proof of Theorem 1, which is, in particular,
concerned with data obtained from a full model.
\begin{proof}[Proof of Theorem 1]
To begin, we decompose the stochastic quantity in the absolute
brackets in the left--hand side of the assertion into a sum of three
remainder terms:
\begin{equation*}
R_1(t) = \avj \bigg\{\1\big[\hve_j \leq t\big] 
 - E\bigg[F\bigg(t + \frac{\rhat(X) - r(X)}{\sigma(X)}
 + t\frac{\shat(X) - \sigma(X)}{\sigma(X)}\bigg) 
 \,\bigg|\, \mathbb{D}\bigg]
 - \1\big[e_j \leq t\big] + F(t)\bigg\},
\end{equation*}
\begin{align*}
R_2(t) &= E\bigg[F\bigg(t + \frac{\rhat(X) - r(X)}{\sigma(X)}
 + t\frac{\shat(X) - \sigma(X)}{\sigma(X)}\bigg) 
 \,\bigg|\, \mathbb{D}\bigg] - F(t) \\
&\quad - f(t)\int_{[0,\,1]^m}\, \frac{\rhat(x) - r(x)}{\sigma(x)}g(x)\,dx
 - tf(t)\int_{[0,\,1]^m}\, \frac{\shat(x) - \sigma(x)}{\sigma(x)}g(x)\,dx
\end{align*}
and
\begin{align*}
R_3(t) &= f(t)\bigg(\int_{[0,\,1]^m}\, 
 \frac{\rhat(x) - r(x)}{\sigma(x)}g(x)\,dx - \avj e_j\bigg) \\
&\quad + tf(t)\bigg(\int_{[0,\,1]^m}\, 
 \frac{\shat(x) - \sigma(x)}{\sigma(x)}g(x)\,dx 
 - \avj\frac{e_j^2 - 1}{2}\bigg).
\end{align*}
The proof will be concluded once we have shown $\sup_{t \in \R}
|R_i(t)| = \opn$ for each $i = 1, 2, 3$.

To show $\sup_{t \in \R} |R_1(t)| = \opn$, we will proceed similarly
as in the proof of Theorem 2.1 of Neumeyer and Van Keilegom (2010),
and we refer the reader to that paper for further details. The main
difference between the proof techniques lies in the details concerning
the estimators $\rhat$ and $\shat$, which require us to use an
approximation argument the previous authors can avoid. We begin this
argument by noting analogous conclusions of Corollary 1 hold for the
local polynomial estimators $\rhat$ and $\rhat_2$, which follows from
the discussion in Remark 1. This means there are random functions
$\ahat_1$ (associated with $\rhat$) and $\ahat_2$ (associated with
$\rhat_2$) that satisfy $P(\ahat_1 \in H_1(m,\alpha)) \to 1$ and
$P(\ahat_2 \in H_1(m,\alpha)) \to 1$, as $n \to \infty$, for some
$\alpha > 0$, $\sup_{x \in [0,\,1]^m} |\rhat(x) - r(x) - \ahat_1(x)| =
\opn$ and $\sup_{x \in [0,\,1]^m} |\rhat_2(x) - r_2(x) - \ahat_2(x)| =
\opn$. In their Lemma A.3, Neumeyer and Van Keilegom (2010) show a
class of functions similar to
\begin{align*}
\Fclass = \bigg\{ (x,\,z) \mapsto & \1\bigg[z \leq t 
 + \bigg\{1 - t\frac{r(x)}{\sigma(x)}\bigg\}\frac{a_1(x)}{\sigma(x)} 
 + \frac{t}{2}\frac{a_2(x)}{\sigma^2(x)}\bigg] \\
& - E\bigg[ F\bigg(t 
 + \bigg\{1 - t\frac{r(X)}{\sigma(X)}\bigg\}\frac{a_1(X)}{\sigma(X)} 
 + \frac{t}{2}\frac{a_2(X)}{\sigma^2(X)}\bigg) \bigg]
 \,:\, t \in \R,\, a_1,a_2 \in H_1(m,\alpha)\bigg\}
\end{align*}
is $G\times F$--Donsker, and, since our argument that $\Fclass$ is
also $G\times F$--Donsker is very similar, it is omitted. It then
follows by Corollary 2.3.12 in van der Vaart and Wellner (1996) for
the stochastic equicontinuity condition for empirical processes
ranging over $\mathfrak{F}$ to hold, i.e.\ writing $\fmem$ for the
form of the map in the definition of $\Fclass$ above, we have, for any
$\epsilon > 0$,
\begin{equation} \label{equicon}
\lim_{\alpha \downarrow 0} \limsup_{n \to \infty} P\bigg(
 \sup_{\fmem_1,\fmem_2 \in \Fclass \,:\, \Var[\fmem_1(X,e) - \fmem_2(X,e)] \leq \alpha}
 n^{-1/2}\bigg|\sj \Big\{\fmem_1\big(X_j,e_j\big) 
 - \fmem_2\big(X_j,e_j\big)\Big\}\bigg| > \epsilon
 \bigg) = 0.
\end{equation}

We will now use equicontinuity of empirical processes indexed by
$\Fclass$ to finish proving the assertion. In what follows we may
assume that $\ahat_1$ and $\ahat_2$ belong to $H_1(m,\alpha)$, which
we have already shown is an event with probability tending to one as
$n$ increases. Hence, we have $\fmem_{t,\ahat_1,\ahat_2} \in \Fclass$,
where now the expected value is conditional on the data $\Dset =
\{(X_1,Y_1),\ldots,(X_n,Y_n)\}$ and we have chosen $\ahat_1$ for $a_1$
and $\ahat_2$ for $a_2$ in the definition of $\Fclass$. We also have
$\fmem_{t,0,0} \in \Fclass$, which now corresponds with choosing the
zero function for both of $a_1$ and $a_2$ in the definition of
$\Fclass$ above. Inspecting page 961 of the proof of Lemma 1 of
M\"uller, Schick and Wefelmeyer (2009) shows for their situation
$\sup_{x \in [0,\,1]^m}|\ahat(x)| = \op$, which continues to hold in
the present situation, i.e.\ both $\ahat_1$ and $\ahat_2$ satisfy
\begin{equation*}
\sup_{x \in [0,\,1]^m} \big|\ahat_1(x)\big| = \op 
\quad\text{and}\quad 
\sup_{x \in [0,\,1]^m}\big|\ahat_2(x)\big| = \op.
\end{equation*}

In order to apply \eqref{equicon}, we need to consider the variation
of the difference $\fmem_{t,\ahat_1,\ahat_2}(X,e) -
\fmem_{t,0,0}(X,e)$ and find that it is asymptotically negligible. To
check the variance condition beneath the supremum in \eqref{equicon}
is satisfied, we calculate $\Var[\fmem_{t,\ahat_1,\ahat_2}(X,e) -
\fmem_{t,0,0}(X,e) \,|\, \Dset]$ and find it is equal to
\begin{align*}
&E\Bigg[ \bigg\{ \1\bigg[e \leq t 
 + \bigg\{1 - t\frac{r(X)}{\sigma(X)}\bigg\}
 \frac{\ahat_1(X)}{\sigma(X)}
 + \frac{t}{2}\frac{\ahat_2(X)}{\sigma^2(X)} \bigg] \\
&\quad\phantom{E\bigg[ \bigg\}}
 - E\bigg[ F\bigg(t 
 + \bigg\{1 - t\frac{r(X)}{\sigma(X)}\bigg\}
 \frac{\ahat_1(X)}{\sigma(X)}
 + \frac{t}{2}\frac{\ahat_2(X)}{\sigma^2(X)}\bigg)
 \,\bigg|\,\Dset\bigg] \bigg\}^2 
 \,\Bigg|\, \Dset \Bigg] \\
&\quad - 2E\bigg[\bigg\{\1\bigg[e \leq t 
 + \bigg\{1 - t\frac{r(X)}{\sigma(X)}\bigg\}
 \frac{\ahat_1(X)}{\sigma(X)}
 + \frac{t}{2}\frac{\ahat_2(X)}{\sigma^2(X)}\bigg] \\
&\qquad\phantom{2E\bigg[\bigg\{}
 - E\bigg[ F\bigg(t 
 + \bigg\{1 - t\frac{r(X)}{\sigma(X)}\bigg\}
 \frac{\ahat_1(X)}{\sigma(X)}
 + \frac{t}{2}\frac{\ahat_2(X)}{\sigma^2(X)}\bigg)
 \,\bigg|\,\Dset\bigg] \bigg\}
 \times\bigg\{\1[e \leq t] - F(t)\bigg\}\,\bigg|\, \Dset\bigg] \\
&\quad + E\Big[\big\{\1[e \leq t] - F(t)\big\}^2\Big] \\
&= E\bigg[F\bigg(t + \bigg\{1 - t\frac{r(X)}{\sigma(X)}\bigg\}
 \frac{\ahat_1(X)}{\sigma(X)}
 + \frac{t}{2}\frac{\ahat_2(X)}{\sigma^2(X)}\bigg)
 \,\bigg|\, \Dset\bigg] \\
&\quad - E^2\bigg[ F\bigg(t + \bigg\{
 1 - t\frac{r(X)}{\sigma(X)} \bigg\}\frac{\ahat_1(X)}{\sigma(X)}
 + \frac{t}{2}\frac{\ahat_2(X)}{\sigma^2(X)}\bigg)
 \,\bigg|\, \Dset \bigg] \\
&\quad + F(t) - F^2(t) -2E\bigg[ F\bigg( \min\bigg\{ t,\,t
 + \bigg\{1 - t\frac{r(X)}{\sigma(X)}\bigg\}
 \frac{\ahat_1(X)}{\sigma(X)}
 + \frac{t}{2}\frac{\ahat_2(X)}{\sigma^2(X)} \bigg\} \bigg)
 \,\bigg|\, \Dset \bigg] \\
&\quad + 2F(t)E\bigg[ F\bigg( t
 + \bigg\{1 - t\frac{r(X)}{\sigma(X)}\bigg\}
 \frac{\ahat_1(X)}{\sigma(X)}
 + \frac{t}{2}\frac{\ahat_2(X)}{\sigma^2(X)}\bigg)
 \,\bigg|\, \Dset \bigg] \\
&= E\bigg[F\bigg( t
 + \bigg\{1 - t\frac{r(X)}{\sigma(X)}\bigg\}
 \frac{\ahat_1(X)}{\sigma(X)}
 + \frac{t}{2}\frac{\ahat_2(X)}{\sigma^2(X)}\bigg) \\
&\qquad\phantom{E\bigg[}
 - F\bigg( \min\bigg\{ t,\,t
 + \bigg\{1 - t\frac{r(X)}{\sigma(X)}\bigg\}
 \frac{\ahat_1(X)}{\sigma(X)}
 + \frac{t}{2}\frac{\ahat_2(X)}{\sigma^2(X)} \bigg\} \bigg) 
 \,\bigg|\, \Dset \bigg] \\
&\quad + E\bigg[ F(t) - F\bigg( \min\bigg\{ t,\,t 
 + \bigg\{1 - t\frac{r(X)}{\sigma(X)}\bigg\}
 \frac{\ahat_1(X)}{\sigma(X)}
 + \frac{t}{2}\frac{\ahat_2(X)}{\sigma^2(X)} \bigg\} \bigg) 
 \,\bigg|\, \Dset \bigg] \\
&\quad + \bigg\{ E\bigg[ F(t) - F\bigg( t
 + \bigg\{1 - t\frac{r(X)}{\sigma(X)}\bigg\}
 \frac{\ahat_1(X)}{\sigma(X)}
 + \frac{t}{2}\frac{\ahat_2(X)}{\sigma^2(X)}\bigg)
 \,\bigg|\, \Dset \bigg] \bigg\} \\
&\qquad\times E\bigg[ F\bigg( t
 + \bigg\{1 - t\frac{r(X)}{\sigma(X)}\bigg\}
 \frac{\ahat_1(X)}{\sigma(X)}
 + \frac{t}{2}\frac{\ahat_2(X)}{\sigma^2(X)} \bigg)
 \,\bigg|\,\Dset \bigg] \\
&\quad + \bigg\{ E\bigg[ F\bigg( t
 + \bigg\{1 - t\frac{r(X)}{\sigma(X)}\bigg\}
 \frac{\ahat_1(X)}{\sigma(X)}
 + \frac{t}{2}\frac{\ahat_2(X)}{\sigma^2(X)}\bigg) - F(t)
 \,\bigg|\, \Dset \bigg] \bigg\} \times F(t) \\
&= E\bigg[ F\bigg( \max\bigg\{ t,\,t
 + \bigg\{1 - t\frac{r(X)}{\sigma(X)}\bigg\}
 \frac{\ahat_1(X)}{\sigma(X)}
 + \frac{t}{2}\frac{\ahat_2(X)}{\sigma^2(X)}\bigg\} \bigg) \\
&\qquad\phantom{E\bigg[}
 - F\bigg( \min\bigg\{ t,\,t
 + \bigg\{1 - t\frac{r(X)}{\sigma(X)}\bigg\}
 \frac{\ahat_1(X)}{\sigma(X)}
 + \frac{t}{2}\frac{\ahat_2(X)}{\sigma^2(X)} \bigg\} \bigg)
 \,\bigg|\, \Dset \bigg] \\
&\quad - E^2\bigg[F\bigg(\max\bigg\{t,\,t 
 + \bigg\{1 - t\frac{r(X)}{\sigma(X)}\bigg\}
 \frac{\ahat_1(X)}{\sigma(X)}
 + \frac{t}{2}\frac{\ahat_2(X)}{\sigma^2(X)} \bigg\} \bigg) \\
&\qquad\phantom{-E^2\bigg[}
 - F\bigg( \min\bigg\{ t,\,t
 + \bigg\{1 - t\frac{r(X)}{\sigma(X)}\bigg\}
 \frac{\ahat_1(X)}{\sigma(X)}
 + \frac{t}{2}\frac{\ahat_2(X)}{\sigma^2(X)} \bigg\} \bigg)
 \,\bigg|\, \Dset \bigg].
\end{align*}
Therefore, we find
\begin{align*}
&\sup_{t \in \R} \Var\big[ \fmem_{t,\ahat_1,\ahat_2}(X,e) 
 - \fmem_{t,0,0}(X,e) \,|\, \Dset \big] \\
&\leq E\bigg[ F\bigg( \max\bigg\{ t,\,t
 + \bigg\{1 - t\frac{r(X)}{\sigma(X)}\bigg\}
 \frac{\ahat_1(X)}{\sigma(X)}
 + \frac{t}{2}\frac{\ahat_2(X)}{\sigma^2(X)} \bigg\} \bigg) \\
&\qquad\phantom{E\bigg[}
 - F\bigg( \min\bigg\{ t,\,t
 + \bigg\{1 - t\frac{r(X)}{\sigma(X)}\bigg\}
 \frac{\ahat_1(X)}{\sigma(X)}
 + \frac{t}{2}\frac{\ahat_2(X)}{\sigma^2(X)} \bigg\} \bigg)
 \,\bigg|\, \Dset \bigg] \\
&\leq \bigg( \sup_{t \in \R} f(t)
 \bigg[\inf_{x \in [0,\,1]^m} \sigma(x)\bigg]^{-1}
 + \sup_{t \in \R} \big|tf(t)\big|
 \sup_{x \in [0,\,1]^m} |r(x)|
 \bigg[\inf_{x \in [0,\,1]^m} \sigma(x)\bigg]^{-2} \bigg)
 \sup_{x \in [0,\,1]^m} \big|\ahat_1(x)\big| \\
&\quad + \frac12 \sup_{t \in \R} \big|tf(t)\big|
 \bigg[\inf_{x \in [0,\,1]^m} \sigma(x)\bigg]^{-2}
 \sup_{x \in [0,\,1]^m} \big|\ahat_2(x)\big|.
\end{align*}
It then follows that $\Var[\fmem_{t,\ahat_1,\ahat_2}(X,e) -
\fmem_{t,0,0}(X,e)\,|\,\mathbb{D}] = \op$, $t \in \R$, from the
results above, i.e.\ the variance is asymptotically negligible. Hence,
have asymptotic equicontinuity for empirical processes indexed by the
elements of $\Fclass$ corresponding to the choices $\ahat_1$ and
$\ahat_2$, for $a_1$ and $a_2$, and the zero function, in place of
both $a_1$ and $a_2$. This implies
\begin{equation} \label{R1remainder1}
\sup_{t \in \R}\bigg|\avj \Big\{
 \fmem_{t,\ahat_1,\ahat_2}\big(X_j,e_j\big) 
 - \fmem_{t,0,0}\big(X_j,e_j\big)\Big\}\bigg| = \opn.
\end{equation}
Note, the statement above continues to hold using
$\fmem_{n,t,\ahat_1,\ahat_2}$ in place of $\fmem_{t,\ahat_1,\ahat_2}$,
where $\{\fmem_{n,t,\ahat_1,\ahat_2}\}_{n \geq 1}$ is any sequence of
functions from $\Fclass$ converging to the same limit as
$\fmem_{t,\ahat_1,\ahat_2}$ (i.e.\ $\fmem_{t,0,0}$), as $n$ increases,
that continues to satisfy the variation condition above, which will be
an important observation for the arguments that follow.

We can bound $\sup_{t \in \R} |R_1(t)|$ by the sum of the left--hand
side of \eqref{R1remainder1} and 
\begin{equation} \label{R1remainder2}
\sup_{t \in \R}
 \bigg|\avj \Big\{\chi_t\big(X_j,e_j\big) 
 - \fmem_{t,\ahat_1,\ahat_2}\big(X_j,e_j\big)\Big\} \bigg|,
\end{equation}
where $\chi_t(X_j,e_j)$, $j=1,\ldots,n$ and $t \in \R$, is equal to
\begin{equation*}
\1\bigg[e_j \leq t + 
 \frac{\rhat(X_j) - r(X_j)}{\sigma(X_j)}
 + t\frac{\shat(X_j) - \sigma(X_j)}{\sigma(X_j)}\bigg]
 - E\bigg[F\bigg(t + 
 \frac{\rhat(X) - r(X)}{\sigma(X)}
 + t\frac{\shat(X) - \sigma(X)}{\sigma(X)}\bigg)
 \,\bigg|\, \mathbb{D} \bigg].
\end{equation*}
To continue, we will require some auxiliary results. By a direct
application of Theorem 6 of Masry (1996), we have $\sup_{x \in
  [0,\,1]^m} \{\rhat(x) - r(x)\}^2 = o(n^{-1/2})$, $\sup_{x \in
  [0,\,1]^m} \{\rhat_2(x) - r_2(x)\}^2 = o(n^{-1/2})$ and $\sup_{x \in
  [0,\,1]^m}|\rhat(x) - r(x)||\rhat_2(x) - r_2(x)| = o(n^{-1/2})$,
almost surely. Now write
\begin{equation*}
\shat(x) - \sigma(x) = \frac{\shat^2(x) - \sigma^2(x)}{2\sigma(x)}
 -\frac{\{\shat(x) - \sigma(x)\}^2}{2\sigma(x)}
\end{equation*}
and
\begin{equation*}
\rhat^2(x) - r^2(x) = 2r(x)\big\{\rhat(x) - r(x)\big\} 
 + \big\{\rhat(x) - r(x)\big\}^2.
\end{equation*}
Together these results imply
\begin{align*}
\shat(x) - \sigma(x) &= \frac{\rhat_2(x) - r_2(x)}{2\sigma(x)}
 - \frac{r(x)}{\sigma(x)}\big\{\rhat(x) - r(x)\big\}
 - \frac{\{\rhat(x) - r(x)\}^2}{2\sigma(x)}
 - \frac{\{\shat(x) - \sigma(x)\}^2}{2\sigma(x)}
\end{align*}
and
\begin{align*}
\big\{\shat(x) - \sigma(x)\big\}^2 &=
 \frac{\{\rhat_2(x) - r(x)\}^2}{4\sigma^2(x)}
 - \frac{r(x)}{\sigma^2(x)}\big\{\rhat_2(x) - r_2(x)\big\}
 \big\{\rhat(x) - r(x)\big\} \\
&\quad - \frac{\{\rhat_2(x) - r_2(x)\}
 \{\rhat(x) - r(x)\}^2}{2\sigma^2(x)}
 - \frac{\{\rhat_2(x) - r_2(x)\}
 \{\shat(x) - \sigma(x)\}^2}{2\sigma^2(x)} \\
&\quad + \frac{r^2(x)}{\sigma^2(x)}\big\{\rhat(x) - r(x)\big\}^2
 + \frac{r(x)}{\sigma^2(x)}\big\{\rhat(x) - r(x)\big\}^3 \\
&\quad + \frac{r(x)}{\sigma^2(x)}\big\{\rhat(x) - r(x)\big\}
 \big\{\shat(x) - \sigma(x)\big\}^2
 + \frac{\{\rhat(x) - r(x)\}^4}{4\sigma^2(x)} \\
&\quad + \frac{\{\rhat(x) - r(x)\}^2
 \{\shat(x) - \sigma(x)\}^2}{2\sigma^2(x)}
 + \frac{\{\shat(x) - \sigma(x)\}^4}{4\sigma^2(x)}.
\end{align*}
Combining the last statement with the results above, we find
$\sup_{x \in [0,\,1]^m} \{\shat(x) - \sigma(x)\}^2 = o(n^{-1/2})$,
almost surely. Using the definitions of $\rhat$ and $\shat$, we find
$t + \{\rhat(x) - r(x)\}/\sigma(x) + t\{\shat(x) -
\sigma(x)\}/\sigma(x)$ is equal to 
\begin{gather*}
t\bigg\{1 + \frac{\rhat_2(x) - r_2(x) - \ahat_2(x)}{2\sigma^2(x)}
 - \frac{r(x)}{\sigma(x)}
 \frac{\rhat(x) - r(x) - \ahat_1(x)}{\sigma(x)}
 - \frac{\{\rhat(x) - r(x)\}^2}{2\sigma^2(x)}
 - \frac{\{\shat(x) - \sigma(x)\}^2}{2\sigma^2(x)}\bigg\} \\
 + \frac{\rhat(x) - r(x) - \ahat_1(x)}{\sigma(x)} 
 + \bigg\{1 - t\frac{r(x)}{\sigma(x)}\bigg\}
 \frac{\ahat_1(x)}{\sigma(x)}
 + \frac{t}{2}\frac{\ahat_2(x)}{\sigma^2(x)}.
\end{gather*}
This means we can appropriately choose random sequences $\{u_n\}_{n
  \geq 1}$ and $\{v_n\}_{n \geq 1}$, where
\begin{align*}
u_n &= 4\bigg[\inf_{x \in [0,\,1]^m} \sigma(x)\bigg]^{-2} 
 \max\bigg\{
 \frac{n^{1/2}}{2} \sup_{x \in [0,\,1]^m}
 \big|\rhat_2(x) - r(x) - \ahat_2(x)\big|, \\
&\quad\phantom{= 4\bigg[\inf_{x \in [0,\,1]^m} \sigma(x)\bigg]^{-2}
 \max\bigg\{}
 n^{1/2} \sup_{x \in [0,\,1]^m}\big|r(x)\big|
 \sup_{x \in [0,\,1]^m}\big|\rhat(x) - r(x) - \ahat_1(x)\big|, \\
&\quad\phantom{= 4\bigg[\inf_{x \in [0,\,1]^m} \sigma(x)\bigg]^{-2}
 \max\bigg\{}
 \frac{n^{1/2}}{2} \sup_{x \in [0,\,1]^m}
 \big\{\rhat(x) - r(x)\big\}^2,\,
 \frac{n^{1/2}}{2} \sup_{x \in [0,\,1]^m}
 \big\{\shat(x) - \sigma(x)\big\}^2\bigg\} \\
&= \op
\end{align*}
and
\begin{equation*}
v_n = n^{1/2} \bigg[\inf_{x \in [0,\,1]^m} \sigma(x)\bigg]^{-1}
 \sup_{x \in [0,\,1]^m} \big|\rhat(x) - r(x) - \ahat_1(x)\big| \\
 = \op,
\end{equation*}
which depend only on $\Dset$. We then define function sequences
$\fmem_{n,t,\ahat_1,\ahat_2}^{++}$,
$\fmem_{n,t,\ahat_1,\ahat_2}^{+-}$, $\fmem_{n,t,\ahat_1,\ahat_2}^{-+}$
and $\fmem_{n,t,\ahat_1,\ahat_2}^{--}$ from $\Fclass$ as
follows. Define the sequences $\fmem_{n,t,\ahat_1,\ahat_2}^{++}$ and
$\fmem_{n,t,\ahat_1,\ahat_2}^{--}$ as
\begin{align*}
\fmem_{n,t,\ahat_1,\ahat_2}^{++}(x,z) &= 
 \1\bigg[ z \leq t\big\{1 + n^{-1/2}u_n\big\} + n^{-1/2}v_n 
 + \bigg\{1 - t\frac{r(x)}{\sigma(x)}\bigg\}
 \frac{\ahat_1(x)}{\sigma(x)}
 + \frac{t}{2}\frac{\ahat_2(x)}{\sigma^2(x)} \bigg] \\
&\quad - E\bigg[ F\bigg( t\big\{1 + n^{-1/2}u_n\big\} + n^{-1/2}v_n
 + \bigg\{1 - t\frac{r(X)}{\sigma(X)}\bigg\}
 \frac{\ahat_1(X)}{\sigma(X)}
 + \frac{t}{2}\frac{\ahat_2(X)}{\sigma^2(X)} \bigg)
 \,\bigg|\, \Dset \bigg]
\end{align*}
and
\begin{align*}
\fmem_{n,t,\ahat_1,\ahat_2}^{--}(x,z) &=
 \1\bigg[ z \leq t\big\{1 - n^{-1/2}u_n\big\} - n^{-1/2}v_n
 + \bigg\{1 - t\frac{r(x)}{\sigma(x)}\bigg\}
 \frac{\ahat_1(x)}{\sigma(x)}
 + \frac{t}{2}\frac{\ahat_2(x)}{\sigma^2(x)} \bigg] \\
&\quad - E\bigg[ F\bigg( t\big\{1 - n^{-1/2}u_n\big\} - n^{-1/2}v_n
 + \bigg\{1 - t\frac{r(X)}{\sigma(X)}\bigg\}
 \frac{\ahat_1(X)}{\sigma(X)}
 + \frac{t}{2}\frac{\ahat_2(X)}{\sigma^2(X)} \bigg)
 \,\bigg|\, \Dset \bigg].
\end{align*}
The remaining two sequences are defined similarly to those above. We
will now consider the case $t \in [0,\,\infty)$. On this region, we
can bound \eqref{R1remainder2} by a sum of three terms:
\begin{equation} \label{equiconupper}
\sup_{t \in [0,\,\infty)} \bigg|\avj \Big\{
 \fmem_{n,t,\ahat_1,\ahat_2}^{++}\big(X_j,e_j\big)
 - \fmem_{n,t,\ahat_1,\ahat_2}^{--}\big(X_j,e_j\big) \Big\} \bigg|,
\end{equation}
\begin{align} \label{equiconrem}
\sup_{t \in [0,\,\infty)} \bigg|&
 E\bigg[ F\bigg( t\big\{1 + n^{-1/2}u_n\big\} + n^{-1/2}v_n
 + \bigg\{1 - t\frac{r(X)}{\sigma(X)}\bigg\}
 \frac{\ahat_1(X)}{\sigma(X)}
 + \frac{t}{2}\frac{\ahat_2(X)}{\sigma^2(X)} \bigg)
 \,\bigg|\, \Dset \bigg] \\ \nonumber
& - E\bigg[ F\bigg( t\big\{1 - n^{-1/2}u_n\big\} - n^{-1/2}v_n
 + \bigg\{1 - t\frac{r(X)}{\sigma(X)}\bigg\}
\frac{\ahat_1(X)}{\sigma(X)}
 + \frac{t}{2}\frac{\ahat_2(X)}{\sigma^2(X)} \bigg)
 \,\bigg|\, \Dset \bigg] \bigg|
\end{align}
and 
\begin{align} \label{origrem}
\sup_{t \in [0,\,\infty)} \bigg|
 &E\bigg[ F\bigg( t + \frac{\rhat(X) - r(X)}{\sigma(X)}
 + t\frac{\shat(X) - \sigma(X)}{\sigma(X)} \bigg) \\ \nonumber
&\quad - F\bigg( t + \bigg\{1 - t\frac{r(X)}{\sigma(X)}\bigg\}
 \frac{\ahat_1(X)}{\sigma(X)}
 + \frac{t}{2}\frac{\ahat_2(X)}{\sigma^2(X)} \bigg)
 \,\bigg|\, \Dset \bigg] \bigg|.
\end{align}
Analogous arguments for the case of $t \in (-\infty,\,0)$ lead to a
similar bound, where $\fmem_{n,t,\ahat_1,\ahat_2}^{-+}$ replaces
$\fmem_{n,t,\ahat_1,\ahat_2}^{++}$, $\fmem_{n,t,\ahat_1,\ahat_2}^{+-}$
replaces $\fmem_{n,t,\ahat_1,\ahat_2}^{--}$, \eqref{equiconrem} is
appropriately adjusted and the supremum in each term is restricted to
$(-\infty,\,0)$.

Following the arguments above, we will now specialize the asymptotic
equicontinuity condition above to show \eqref{equiconupper} is
$\opn$. Repeating the calculations above for
$\Var[\fmem_{t,\ahat_1,\ahat_2}(X,e) - \fmem_{t,0,0}(X,e) \,|\,
\Dset]$, we find, now for $t \in \R$,
\begin{align*}
&\sup_{t \in \R} \Var\big[\fmem_{n,t,\ahat_1,\ahat_2}^{++}(X,e) 
 - \fmem_{n,t,\ahat_1,\ahat_2}^{--}(X,e)\,|\,\Dset\big] \\
&\leq E\bigg[ F\bigg( \max\bigg\{ t\big(1 + n^{-1/2}u_n\big)
 + n^{-1/2}v_n + \bigg\{1 - t\frac{r(X)}{\sigma(X)}\bigg\}
 \frac{\ahat_1(X)}{\sigma(X)}
 + \frac{t}{2}\frac{\ahat_2(X)}{\sigma^2(X)}, \\
&\qquad\phantom{E\bigg[F\bigg(\max\bigg\{} 
 t\big(1 - n^{-1/2}u_n\big) - n^{-1/2}v_n
 + \bigg\{1 - t\frac{r(X)}{\sigma(X)}\bigg\}
 \frac{\ahat_1(X)}{\sigma(X)}
 + \frac{t}{2}\frac{\ahat_2(X)}{\sigma^2(X)} \bigg\} \bigg) \\
&\quad\phantom{E\bigg[}
 - F\bigg( \min\bigg\{ t\big(1 + n^{-1/2}u_n\big) + n^{-1/2}v_n
 + \bigg\{1 - t\frac{r(X)}{\sigma(X)}\bigg\}
 \frac{\ahat_1(X)}{\sigma(X)}
 + \frac{t}{2}\frac{\ahat_2(X)}{\sigma^2(X)}, \\
&\qquad\phantom{E\bigg[ - F\bigg(\min\bigg\{}
 t\big(1 - n^{-1/2}u_n\big) - n^{-1/2}v_n
 + \bigg\{1 - t\frac{r(X)}{\sigma(X)}\bigg\}
\frac{\ahat_1(X)}{\sigma(X)}
 + \frac{t}{2}\frac{\ahat_2(X)}{\sigma^2(X)} \bigg\} \bigg)
 \,\bigg|\, \Dset\bigg] \\
&\leq 2n^{-1/2} v_n \sup_{t \in \R} f(t)
 + 2n^{-1/2} u_n \sup_{t \in \R} \big|tf(t)\big|.
\end{align*}
This bound is $\opn$ and, therefore, $\op$, which implies the variance
is asymptotically negligible:
$\Var[\fmem_{n,t,\ahat_1,\ahat_2}^{++}(X,e) -
\fmem_{n,t,\ahat_1,\ahat_2}^{--}(X,e)\,|\,\Dset] = \op$, uniformly in
$t \in \R$. Hence, we have asymptotic equicontinuity and it follows
for \eqref{equiconupper} to be $\opn$ as desired by further
restricting $t$ to $[0,\,\infty)$. Continuing, we have
\begin{align*}
&\sup_{t \in \R} \bigg|
 E\bigg[ F\bigg( t\big(1 + n^{-1/2}u_n\big) + n^{-1/2}v_n
 + \bigg\{1 - t\frac{r(X)}{\sigma(X)}\bigg\}
 \frac{\ahat_1(X)}{\sigma(X)}
 + \frac{t}{2}\frac{\ahat_2(X)}{\sigma^2(X)} \bigg) \\
&\phantom{\sup_{t \in \R} \bigg| E\bigg[}
 - F\bigg( t\big(1 - n^{-1/2}u_n\big) - n^{-1/2}v_n
 + \bigg\{1 - t\frac{r(X)}{\sigma(X)}\bigg\}
 \frac{\ahat_1(X)}{\sigma(X)}
 + \frac{t}{2}\frac{\ahat_2(X)}{\sigma^2(X)} \bigg)
 \,\bigg|\, \Dset\bigg] \bigg| \\
&\leq 2n^{-1/2} v_n \sup_{t \in \R} f(t)
 + 2n^{-1/2} u_n \sup_{t \in \R} \big|tf(t)\big|.
\end{align*}
Since $\{u_n\}_{n \geq 1}$ and $\{v_n\}_{n \geq 1}$ are both $\op$, it
follows for the bound above to be $\opn$. This also implies
\eqref{equiconrem} is $\opn$. Now using the identity for $t +
\{\rhat(x) - r(x)\}/\sigma(x) + t\{\shat(x) - \sigma(x)\}/\sigma(x)$,
we have
\begin{align*}
&\sup_{t \in \R} \bigg| E\bigg[ F\bigg( t
 + \frac{\rhat(X) - r(X)}{\sigma(X)}
 + t\frac{\shat(X) - \sigma(X)}{\sigma(X)} \bigg) \\
&\quad\phantom{\sup_{t \in \R} \bigg| E\bigg[}
 - F\bigg( t + \bigg\{1 - t\frac{r(X)}{\sigma(X)}\bigg\}
 \frac{\ahat_1(X)}{\sigma(X)}
 + \frac{t}{2}\frac{\ahat_2(X)}{\sigma^2(X)} \bigg)
 \,\bigg|\, \Dset \bigg] \bigg| \\
&= \sup_{t \in \R} \bigg| E\bigg[ F\bigg( t
 + \bigg\{1 - t\frac{r(X)}{\sigma(X)}\bigg\}
 \frac{\ahat_1(X)}{\sigma(X)}
 + \frac{t}{2}\frac{\ahat_2(X)}{\sigma^2(X)}
 + \frac{\rhat(X) - r(X) - \ahat_1(X)}{\sigma(X)} \\
&\quad\phantom{= \sup_{t \in \R} \bigg| E\bigg[F\bigg(}
 + t\bigg\{
 \frac{\rhat_2(X) - r_2(X) - \ahat_2(X)}{2\sigma^2(X)}
 - \frac{r(X)}{\sigma(X)}\frac{\rhat(X) - r(X)
 - \ahat_1(X)}{\sigma(X)} \\
&\qquad\phantom{= \sup_{t \in \R} \bigg| E\bigg[F\bigg(t\bigg\{}
 - \frac{\{\rhat(X) - r(X)\}^2}{2\sigma^2(X)}
 - \frac{\{\shat(X) - \sigma(X)\}^2}{2\sigma^2(X)}
 \bigg\} \bigg) \\
&\quad\phantom{= \sup_{t \in \R} \bigg| E\bigg[}
 - F\bigg(t + \bigg\{1 - \frac{r(X)}{\sigma(X)}\bigg\}
 \frac{\ahat_1(X)}{\sigma(X)}
 + \frac{t}{2}\frac{\ahat_2(X)}{\sigma^2(X)} \bigg)
 \,\bigg|\, \Dset \bigg] \bigg| \\
&\leq \Bigg( \sup_{t \in \R} f(t)
 \bigg[ \inf_{x \in [0,\,1]^m} \sigma(x)\bigg]^{-1}
 + \sup_{t \in \R} \big|tf(t)\big|
 \bigg[\sup_{x \in [0,\,1]^m} \big|r(x)\big|\bigg]
 \bigg[\inf_{x \in [0,\,1]^m} \sigma(x)\bigg]^{-2} \Bigg) \\
&\quad\times \sup_{x \in [0,\,1]^m}
 \big|\rhat(x) - r(x) - \ahat_1(x)\big| \\
&\quad + \frac12 \sup_{t \in \R} \big|tf(t)\big| 
 \bigg[\inf_{x \in [0,\,1]^m} \sigma(x)\bigg]^{-2}
 \sup_{x \in [0,\,1]^m} \big|\rhat_2(x) - r_2(x) - \ahat_2(x)\big| \\
&\quad + \frac12 \sup_{t \in \R} \big|tf(t)\big|
 \bigg[\inf_{x \in [0,\,1]^m} \sigma(x)\bigg]^{-2}
 \bigg( \sup_{x \in [0,\,1]^m} \big\{\rhat(x) - r(x)\}^2
 + \sup_{x \in [0,\,1]^m} \big\{\shat(x) - \sigma(x)\big\}^2 \bigg).
\end{align*}
Using the results above, this bound is $\opn$, which implies
\eqref{origrem} is $\opn$. Since the same logic can be applied when
$t \in (-\infty,\,0)$, now using the function sequences
$\{\fmem_{n,t,\ahat_1,\ahat_2}^{-+}\}_{n \geq 1}$ and
$\{\fmem_{n,t,\ahat_1,\ahat_2}^{+-}\}_{n \geq 1}$, the analogous
remainder terms to \eqref{equiconupper}, \eqref{equiconrem} and
\eqref{origrem} are all $\opn$ as well. Combining these results shows
$\sup_{t \in \R} |R_1(t)| = \opn$.

From Remark 1, the random functions $\ahat_1$ and $\ahat_2$
additionally satisfy
\begin{equation} \label{ahat12negli}
\int_{[0,\,1]^m}|\ahat_1(x)|^{1 + b}g(x)\,dx = \opn
\quad\text{and}\quad
\int_{[0,\,1]^m}|\ahat_2(x)|^{1 + b}g(x)\,dx = \opn,
\end{equation}
where $b > m/(2s - m)$. Setting $A(x) = \{\rhat(x) - r(x)\}/\sigma(x)$
and $B(x) = \{\shat(x) - \sigma(x)\}/\sigma(x)$, we can then bound
$\sup_{t \in \R} |R_2(t)|$ by a sum of three terms:
\begin{equation} \label{Arem}
\sup_{t \in \R} \Big|E\big[F\big(t + A(X) + B(X)t\big) 
 - F\big(t + B(X)t\big) - A(X)f\big(t + B(X)t\big) 
 \,\big|\,\mathbb{D}\big]\Big|,
\end{equation}
\begin{equation} \label{Brem}
\sup_{t \in \R} \Big|E\big[F\big(t + B(X)t\big) - F(t) - B(X)tf(t)
 \,\big|\,\mathbb{D}\big]\Big|
\end{equation}
and
\begin{equation} \label{ABrem}
\sup_{t \in \R} \Big|E\big[A(X)\big\{f\big(t + B(X)t\big) - f(t)\big\}
 \,\big|\,\mathbb{D}\big]\Big|.
\end{equation}
To continue, we will require an additional result. Setting $x =
\min\{t,\,t + B(X)t\}$ and $y = \max\{t,\,t + B(X)t\}$, we have $0
\leq y - x = \max\{-B(X)t,\,B(X)t\} = |B(X)t|$, and we find, almost
surely,
\begin{align*}
&\big(1 + x^2\big)\big|f(y) - f(x)\big| \\
&\leq \big|\big(1 + y^2\big)f(y) - \big(1 + x^2\big)f(x)\big| \\
&= \bigg|\int_x^y\,\big\{\big(1 + v^2\big)f'(v)\,dv 
 + \int_x^y\,2vf(v)\,dv\bigg| \\
&\leq \int_x^y\,\big(1 + v^2\big)^{1/2}f^{1/2}(v)
 \big(1 + v^2\big)^{1/2}\bigg|\frac{f'(v)}{f(v)}\bigg|f^{1/2}(v)\,dv
 + 2\int_x^y\,\big|vf(v)\big|\,dv \\
&\leq \bigg\{\int_x^{x + y - x}\,\big(1 + v^2\big)f(v)\,dv\bigg\}^{1/2}
 \bigg\{ \int_x^{x + y - x}\,\big(1 + v^2\big)
 \bigg(\frac{f'(v)}{f(v)}\bigg)^2
 f(v)\,dv \bigg\}^{1/2} \\
&\quad + 2\int_x^{x + y - x}\,\big|vf(v)\big|\,dv \\
&\leq \bigg\{\int_0^1 \big(1 + \{x + s(y - x)\}^2\big)
 f\big(x + s(y - x)\big)\,ds\bigg\}^{1/2} \\
&\quad\times \bigg\{\int_0^1 \big(1 + \{x + s(y - x)\}^2\big)
 \bigg(\frac{f'(x + s(y - x))}{f(x + s(y - x))}\bigg)^2
 f\big(x + s(y - x)\big)\,ds\bigg\}^{1/2} \times \big|B(X)t\big| \\
&\quad + 2\bigg(\sup_{t \in \R} \big|tf(t)\big|\bigg)
 \big|B(X)t\big| \\
&\leq \Bigg( \bigg\{ 2\int_{-\infty}^{\infty}
 \big(1 + v^2\big)\bigg(\frac{f'(v)}{f(v)}\bigg)^2\,F(dv)
 \bigg\}^{1/2}
 + 2\sup_{t \in \R} \big|tf(t)\big| \Bigg)\big|B(X)t\big|.
\end{align*}
This implies
\begin{align*}
&\big|f\big(t + B(X)t\big) - f(t)\big| \\ 
&= \big|f\big(\max\big\{t,\,t + B(X)t\big\}\big) 
 - f\big(\min\big\{t,\,t + B(X)t\big\}\big)\big| \\
&\leq \Bigg( \bigg\{ 2\int_{-\infty}^{\infty}
 \big(1 + v^2\big)\bigg(\frac{f'(v)}{f(v)}\bigg)^2\,F(dv)
 \bigg\}^{1/2}
 + 2\sup_{t \in \R} \big|tf(t)\big| \Bigg)
 \frac{|t|}{1 + t^2\min\{1,\,(1 + B(X))^2\}} 
 \big|B(X)\big| \\
&\leq \Bigg( \bigg\{ 2\int_{-\infty}^{\infty}
 \big(1 + v^2\big)\bigg(\frac{f'(v)}{f(v)}\bigg)^2\,F(dv)
 \bigg\}^{1/2}
 + 2\sup_{t \in \R} \big|tf(t)\big| \Bigg)
 \frac{|t|}{1 + t^2\min\{1,\,(1 + L)^2\}} \big|B(X)\big|,
\end{align*}
almost surely, where $L = \inf_{x \in [0,\,1]^m} B(x)$. We have
already shown that $\sup_{x \in   [0,\,1]^m} |B(x)| = o(1)$ above,
almost surely, and, for large enough $n$, $-1 < \inf_{x \in [0,\,1]^m}
B(x) = L$. Hence, for large enough $n$, the bound above is finite.

It then follows for the map $t \mapsto f(t)$ to be H\"older with
exponent $b$. Writing $K_{f,b}$ for the H\"older constant for $f$ with
exponent $b$, we have
\begin{align*}
&\sup_{t \in \R} \big| E\big[F\big(t + A(X) + B(X)t\big) 
 - F\big(t + B(X)t\big) - A(X)f\big(t + B(X)t\big)
 \,\big|\, \Dset\big] \big| \\
&\leq \sup_{t \in \R} E\bigg[|A(X)|\int_0^1\,
 \big|f\big(t + sA(X) + B(X)t\big) - f\big(t + B(X)t\big)\big|\,ds
 \,\bigg|\, \Dset\bigg] \\
&\leq K_{f,b} \int_0^1\,|s|^b\,ds\, E\big[|A(X)|^{1 + b} \,\big|\,
 \Dset\big] \\
&= C_1 E\big[|A(X)|^{1 + b} \,\big|\, \Dset\big],
\end{align*}
choosing $C_1 = K_{f,b}/\{1 + b\}$. Observing that for any real
numbers $x$ and $y$, and $0 < c$, that $|x + y|^{1 + c} \leq
2^c(|x|^{1+c} + |y|^{1 + c})$, we find
\begin{align*}
&E\big[|A(X)|^{1 + b} \,\big|\, \Dset\big] \\
&= E\bigg[\bigg|\frac{\ahat_1(X)}{\sigma(X)} 
 + \frac{\rhat(X) - r(x) - \ahat_1(X)}{\sigma(X)}\bigg|^{1 + b}
 \,\bigg|\, \Dset\bigg] \\
&\leq 2^b\bigg[\inf_{x \in [0,\,1]^m} \sigma(x)\bigg]^{1 + b}
 \Bigg( \int_{[0,\,1]^m} \big|\ahat_1(x)\big|^{1 + b}g(x)\,dx
 + \bigg[\sup_{x \in [0,\,1]^m} 
 \big|\rhat(x) - r(x) - \ahat_1(x)\big|\bigg]^{1 + b} \Bigg).
\end{align*}
Using the results above, this bound is $\opn$, and, therefore,
\eqref{Arem} is $\opn$.

Using the same procedure as above, we find
\begin{align*}
&\sup_{t \in \R} \big| E\big[F\big(t + B(X)t\big) - F(t) - B(X)tf(t)
 \,\big|\, \Dset\big] \big| \\
&= \sup_{t \in \R} \bigg| E\bigg[B(X)t
 \int_0^1 \big\{f\big(t + sB(X)t\big) - f(t)\big\}\,ds
 \,\big|\, \Dset\bigg] \bigg| \\
&\leq \sup_{t \in \R} E\bigg[\big|B(X)\big| 
 \int_0^1 |t|\big|f\big(t + sB(X)t\big) - f(t)\big|\,ds
 \,\bigg|\, \Dset\bigg] \\
&\leq \bigg(\bigg\{2\int_{-\infty}^{\infty}
 \big(1 + v^2\big)\bigg(\frac{f'(v)}{f(v)}\bigg)^2\,F(dv)\bigg\}^{1/2}
 + 2\sup_{t \in \R} \big|tf(t)\big|\bigg) \\
&\quad\times \bigg(\sup_{t \in \R} \sup_{s \in [0,\,1]} 
 \frac{st^2}{1 + t^2\min\{1,\,(1 + sL)^2\}}\bigg)\times
 E\big[B^2(X) \,\big|\, \Dset\big] \\
&\leq \bigg(\bigg\{2\int_{-\infty}^{\infty}
 \big(1 + v^2\big)\bigg(\frac{f'(v)}{f(v)}\bigg)^2\,F(dv)\bigg\}^{1/2}
 + 2\sup_{t \in \R} \big|tf(t)\big|\bigg) \\
&\quad\times
 \bigg(\sup_{t \in \R} 
 \frac{t^2}{1 + t^2\min\{1,\,(1 + L)^2\}}\bigg)
 E\big[B^2(X) \,\big|\, \Dset\big] \\
&\leq C_2 E\big[B^2(X) \,\big|\, \Dset\big],
\end{align*}
for large enough $n$, choosing $C_2$ proportional to 
\begin{equation*}
\Bigg(\bigg\{2\int_{-\infty}^{\infty}
 \big(1 + v^2\big)\bigg(\frac{f'(v)}{f(v)}\bigg)^2\,F(dv)\bigg\}^{1/2}
 + 2\sup_{t \in \R} \big|tf(t)\big|\Bigg) \Bigg(\sup_{t \in \R} 
 \frac{t^2}{1 + t^2\min\{1,\,(1 + L)^2\}}\Bigg).
\end{equation*}
Continuing, $E[B^2(X)\,|\,\Dset]$ is equal to
\begin{align*}
&E\Bigg[\bigg(\frac{\ahat_2(X)}{2\sigma^2(X)}
 - \frac{r(X)}{\sigma(X)}\frac{\ahat_1(X)}{\sigma(X)}
 + \frac{\rhat_2(X) - r_2(X) - \ahat_2(X)}{2\sigma^2(X)}
 - \frac{r(X)}{\sigma(X)}
 \frac{\rhat(X) - r(X) - \ahat_1(X)}{\sigma(X)} \\
&\phantom{E\Bigg[\bigg(}
 - \frac{\{\rhat(X) - r(X)\}^2}{2\sigma^2(X)}
 - \frac{\{\shat(X) - \sigma(X)\}^2}{2\sigma^2(X)}\bigg)^2
 \,\Bigg|\, \Dset \Bigg],
\end{align*}
which is bounded by
\begin{align*}
&\bigg[\inf_{x \in [0,\,1]^m} \sigma(x)\bigg]^{-4}
 \Bigg( \frac12 \int_{[0,\,1]^m}\big|\ahat_2(x)\big|^2g(x)\,dx
 + 2\bigg[\sup_{x \in [0,\,1]^m}
 \big|\rhat_2(x) - r_2(x) - \ahat_2(x)\big|\bigg]^2 \\
& + 4\bigg[\sup_{x \in [0,\,1]^m} \big|r(x)\big|\bigg]^2 
 \int_{[0,\,1]^m} \big|\ahat_1(x)\big|^2g(x)\,dx
 + 8\bigg[\sup_{x \in [0,\,1]^m}
 \big\{\rhat(x) - r(x)\big\}^2\bigg]^2 \\
& + 8\bigg[\sup_{x \in [0,\,1]^m}
 \big\{\shat(x) - \sigma(x)\big\}^2\bigg]^2
 + 16\bigg[\sup_{x \in [0,\,1]^m} \big|r(x)\big|\bigg]^2
 \bigg[ \sup_{x \in [0,\,1]^m}
 \big|\rhat(x) - r(x) - \ahat_1(x)\big| \bigg]^2
 \Bigg).
\end{align*}
It then follows from the results above and \eqref{ahat12negli} for
this bound to be $\opn$. This implies \eqref{Brem} is $\opn$.

Again, using the procedure above, we find
\begin{align*}
&\sup_{t \in \R} \big| E\big[ A(X)
 \big\{f\big(t + B(X)t\big) - f(t)\big\}
 \,\big|\, \Dset \big] \big| \\
&\leq \sup_{t \in \R} E\big[ \big|A(X)\big|
 \big|f\big(t + B(X)t\big) - f(t)\big|
 \,\big|\, \Dset \big] \\
&\leq \bigg(\bigg\{2\int_{-\infty}^{\infty}
 \big(1 + v^2\big)
 \Bigg( \frac{f'(v)}{f(v)}\bigg)^2\,F(dv)\bigg\}^{1/2}
 + 2\sup_{t \in \R} \big|tf(t)\big| \Bigg) \\
&\quad\times
 \Bigg( \sup_{t \in \R} \frac{|t|}{1 + t^2\min\{1,\,(1 + L)^2\}}
 \Bigg)
 E\big[\big|A(X)B(X)\big|\,\big|\,\Dset\big] \\
&\leq C_3 E\big[\big|A(X)B(X)\big|\,\big|\,\Dset\big],
\end{align*}
for large enough $n$, choosing $C_3$ proportional to
\begin{equation*}
\Bigg(\bigg\{2\int_{-\infty}^{\infty}
 \big(1 + v^2\big)\bigg(\frac{f'(v)}{f(v)}\bigg)^2\,F(dv)
 \bigg\}^{1/2} + 2\sup_{t \in \R} \big|tf(t)\big| \Bigg)
 \Bigg(\sup_{t \in \R} \frac{|t|}{1 + t^2\min\{1,\,(1 + L)^2\}}
 \Bigg).
\end{equation*}
Combining the calculations above, we have
\begin{align*}
&E\big[\big|A(X)B(X)\big| \,\big|\, \Dset\big] \\
&\leq E^{1/2}\big[|A(X)|^2 \,\big|\, \Dset\big]
 E^{1/2}\big[|B(X)|^2 \,\big|\, \Dset\big] \\
&\leq \sqrt{2}\bigg[\inf_{x \in [0,\,1]^m} \sigma(x)\bigg]^3
 \Bigg( \int_{[0,\,1]^m} \big|\ahat_1(x)\big|^2g(x)\,dx
 + \bigg[\sup_{x \in [0,\,1]^m} 
 \big|\rhat(x) - r(x) - \ahat_1(x)\big|\bigg]^2 \Bigg)^{1/2} \\
&\quad\times \Bigg(\frac12
 \int_{[0,\,1]^m}\big|\ahat_2(x)\big|^2g(x)\,dx
 + 4\bigg[\sup_{x \in [0,\,1]^m} \big|r(x)\big|\bigg]^2 
 \int_{[0,\,1]^m} \big|\ahat_1(x)\big|^2g(x)\,dx \\
&\quad\phantom{\leq\times\bigg(}
 + 2\bigg[\sup_{x \in [0,\,1]^m}
 \big|\rhat_2(x) - r_2(x) - \ahat_2(x)\big|\bigg]^2 \\
&\quad\phantom{\leq\times\bigg(}
 + 8\bigg[ \sup_{x \in [0,\,1]^m}
 \big\{\rhat(x) - r(x)\big\}^2 \bigg]^2
 + 8\bigg[ \sup_{x \in [0,\,1]^m}
 \big\{\shat(x) - \sigma(x)\big\}^2 \bigg]^2 \\
&\quad\phantom{\leq\times\bigg(}
 + 16\bigg[\sup_{x \in [0,\,1]^m} \big|r(x)\big|\bigg]^2
 \bigg[ \sup_{x \in [0,\,1]^m}
 \big|\rhat(x) - r(x) - \ahat_1(x)\big|\bigg]^2 \Bigg)^{1/2}.
\end{align*}
Again, it follows from the results above and \eqref{ahat12negli} for
this bound to be $\opn$, which implies \eqref{ABrem} is
$\opn$. Combining the above results for \eqref{Arem}, \eqref{Brem} and
\eqref{ABrem} shows that $\sup_{t \in \R} |R_2(t)| = \opn$.

It follows from the discussion in Remark 1 for analogous conclusions
of Proposition 1 to hold for the estimators $\rhat$ and $\shat$, i.e.\
\begin{equation*}
\bigg|\int_{[0,\,1]^m}\,\frac{\rhat(x) - r(x)}{\sigma(x)}g(x)\,dx
 - \avj e_j\bigg| = \opn
\end{equation*}
and
\begin{equation*}
\bigg|\int_{[0,\,1]^m}\,\frac{\shat(x) - \sigma(x)}{\sigma(x)}g(x)\,dx
 - \avj \frac{e_j^2 - 1}{2}\bigg| = \opn.
\end{equation*}
Since we have $\sup_{t \in \R}f(t) < \infty$ and $\sup_{t \in
  \R}|tf(t)| < \infty$, we find $\sup_{t \in \R}|R_3(t)| = \opn$. This
concludes the proof of Theorem 1.
\end{proof}

%%%%% Section: Acknowledgements %%%%%

\section*{Acknowledgements}
Justin Chown gratefully acknowledges financial support from the
contract `Projet d'Actions de Recherche Concert\'ees' (ARC) 11/16-039
of the `Communaut\'e fran\c{c}aise de Belgique', granted by the
`Acad\'emie universitaire Louvain', and the Collaborative Research
Center ``Statistical modeling of nonlinear dynamic processes'' (SFB
823, Teilprojekt  C4) of the German Research Foundation (DFG). I would
like to thank the two referees for their careful reading of the
manuscript and the very helpful comments that greatly improved the
quality of this work.

%%%%% Section: References %%%%%

\end{document}